\newtheorem{theorem}{Theorem}
\newtheorem{lemma}[theorem]{Lemma}
\theoremstyle{definition}
\newcommand{\calT}{\mathcal{T}}
\crefname{problem}{Problem}{Problems}
\Crefname{problem}{Prob.}{Probs.}
\crefname{algocfline}{line}{lines}
\Crefname{algocfline}{l.}{l.}
\newcommand{\N}{\mathbb{N}}
\newcommand{\R}{\mathbb{R}}
\newcommand{\bigO}{\mathcal{O}}
\newcommand{\cocl}[1]{\ensuremath{\operatorname{#1}}}
\newcommand{\NP}{\cocl{NP}}
\newcommand{\calA}{\mathcal{A}}
\newcommand{\calB}{\mathcal{B}}
\newcommand{\calC}{\mathcal{C}}
\newcommand{\calP}{\mathcal{P}}
\newcommand{\calS}{\mathcal{S}}
\newcommand{\eps}{\varepsilon}
\newcommand{\vc}{\mathrm{vc}}
\DeclareMathOperator{\OPT}{OPT}
\DeclareMathOperator{\OPTHS}{OPT_{HS}}
\DeclareMathOperator{\OPTDS}{OPT_{DS}}
\DeclareMathOperator{\OPTCAP}{OPT_{CAP}}
\DeclareMathOperator{\OPTIND}{OPT_{IND}}
\DeclareMathOperator{\OPTCON}{OPT_{CON}}
\DeclareMathOperator{\OPTNST}{OPT_{NST}}
\DeclarePairedDelimiter{\abs}{\lvert}{\rvert}
\newcommand{\CAP}{\mathrm{cap}}
\newcommand{\creflemmapart}[2]{
	\hyperref[#2]{\namecref{#1}~\labelcref*{#1}(\ref*{#2})}
}
\title{Approximate Turing kernelization and lower bounds for domination problems}
\author{Stefan Kratsch}
\author{Pascal Kunz\thanks{Supported by the DFG Research Training Group 2434 ``Facets of Complexity''.}}
\affil{\small
	Algorithm Engineering, Humboldt-Universit\"at zu Berlin, Berlin, Germany.\protect\\
	\texttt{\{kratsch,kunzpasc\}@informatik.hu-berlin.de}}
\date{}
\begin{document}

\maketitle

\begin{abstract}
	An $\alpha$-approximate polynomial Turing kernelization is a polynomial-time algorithm that computes an $(\alpha c)$-approximate solution for a parameterized optimization problem when given access to an oracle that can compute $c$-approximate solutions to instances with size bounded by a polynomial in the parameter.
	Hols et al.~[ESA 2020] showed that a wide array of graph problems admit a $(1+\eps)$-approximate polynomial Turing kernelization when parameterized by the treewidth of the graph and left open whether \textsc{Dominating Set} also admits such a kernelization.
	
	We show that \textsc{Dominating Set} and several related problems parameterized by treewidth do not admit constant-factor approximate polynomial Turing kernelizations, even with respect to the much larger parameter vertex cover number, under certain reasonable complexity assumptions.
	On the positive side, we show that all of them do have a $(1+\eps)$-approximate polynomial Turing kernelization for every $\eps>0$ for the joint parameterization by treewidth and maximum degree, a parameter which generalizes cutwidth, for example.
\end{abstract}

\section{Introduction}

The gold standard in kernelization is a polynomial (exact) kernelization, i.e.~a compression of input instances to a parameterized problem to a size that is polynomial in the parameter such that an exact solution for the original instance can be recovered from the compressed instance.
Several weaker notions of kernelization have been developed for problems that do not admit polynomial kernelizations.
Turing kernelization~\cite{Binkele2012,Hermelin2013} does away with the restriction that the solution must be recovered from a single compressed instance and instead allow several small instances to be created and the solution to be extracted from solutions to all of these instances.
Lossy kernelizations~\cite{Lokshtanov2017}, in turn, do away with the requirement that the solution that can be recovered from the compressed instance be an optimum solution, allowing the solution to the original instance to be worse than optimal by a constant factor.
Hols et al.~\cite{Hols2020} introduced lossy Turing kernelizations, which allow both multiple compressed instances and approximate solutions, and showed that several graph problems parameterized by treewidth admit $(1+\eps)$-approximate Turing kernelizations for every $\eps>0$.
They left as an open question whether or not the problem \textsc{Dominating Set} parameterized by treewidth also admits a constant-factor approximate Turing kernelization.

\subparagraph*{Our contribution.}

We answer this question in the negative and show that a $\bigO(2^{\log ^c \mathrm{vc}})$-approximate polynomial Turing kernelization for $\textsc{Dominating Set}[\mathrm{vc}]$\footnote{We use $\mathrm{FOO}[X]$ to refer to the problem FOO parameterized by $X$.}, where $\mathrm{vc}$ refers to the vertex cover number, would contradict the Exponential Time Hypothesis.
We prove analogous lower bounds for \textsc{Capacitated Dominating Set}[vc], \textsc{Connected Dominating Set}[vc] for \textsc{Hitting Set}[$\abs{U}$], where $U$ is the universe, and for \textsc{Node Steiner Tree}[$\abs{V\setminus T}$] where $V\setminus T$ is the set of non-terminal vertices.
Of course, the lower bounds for the vertex cover number also imply lower bounds for the smaller parameter treewidth.
Using a second approach for obtaining lower bounds for approximate Turing kernelizations, essentially a gap-introducing polynomial parameter transformation (PPT), we show that $\textsc{Independent Dominating Set}[\vc]$ does not have an $\alpha$-approximate polynomial Turing kernelization for any constant $\alpha$, unless every problem in the complexity class $\mathrm{MK}[2]$ has a polynomial (exact) Turing kernelization, which would contradict a conjecture by Hermelin et al.~\cite{Hermelin2013}.
We then show that for the joint parameterization by treewidth and the maximum degree, each of the aforementioned domination problems does have a $(1+\eps)$-approximate polynomial Turing kernelization for every $\eps > 0$.
This generalizes, for instance, parameterization for cutwidth or bandwidth.

\subparagraph*{Related work.}
For an introduction to kernelization, including brief overviews on lossy and Turing kernelizations, we refer to the standard textbook~\cite{Fomin2019}.
Binkele-Raible et al.~\cite{Binkele2012} introduced the first Turing kernelization for a problem that does not admit a polynomial kernelization.
Since then numerous Turing kernelizations have been published for such problems.
Hermelin et al.~\cite{Hermelin2013} introduced a framework, which we will make use of in \cref{sec:mk2-lb}, for ruling out (exact) polynomial Turing kernelizations.
Fellows et al.~\cite{Fellows2018} were the first to combine the fields of kernelization and approximation.
A later study by Lokshtanov et al.~\cite{Lokshtanov2017} introduced the framework of lossy kernelization that has become more established.
Finally, Hols et al.~\cite{Hols2020} gave the first approximate Turing kernelizations.

Approximation algorithms and lower bounds for domination problems have received considerable attention.
They are closely related to the problems \textsc{Hitting Set} and \textsc{Set Cover}.
A classical result by Chv\'atal~\cite{Chvatal1979} implies a polynomial-time $\bigO(\log n)$-factor approximation for \textsc{Dominating Set}.
There are also $\bigO(\log n)$-factor approximations for \textsc{Connected Dominating Set}~\cite{Guha1998} and \textsc{Capacitated Dominating Set}~\cite{Wolsey1982}, but \textsc{Independent Dominating Set} does not have a $\bigO(n^{1-\eps})$-approximation for any $\eps > 0$, unless $\mathrm{P} = \mathrm{NP}$~\cite{Halldorsson1993}.
Chleb\'ik and Chleb\'ikov\'a~\cite{Chlebik2008} showed that 
\textsc{Dominating Set}, \textsc{Connected Dominating Set}, and \textsc{Capacitated Dominating} do not have constant-factor approximations even on graphs with maximum degree bounded by a constant $\Delta$ and that \textsc{Independent Dominating Set} does not have better than a $\Delta$-factor approximation.

\section{Preliminaries}
\label{sec:prelim}
\subparagraph*{Graphs.}
We use standard graph terminology and all graphs are undirected, simple, and finite.
For a graph $G=(V,E)$ and $X\subseteq V$, we use $N[X] \coloneqq X \cup \{v \in V \mid \exists u \in X \colon \{u,v\} \in E\}$ to denote the \emph{closed neighborhood} of $X$ and, if $v \in V$, then we let $N[v] \coloneqq N[\{v\}]$.
We will use $\Delta$ and $\mathrm{vc}$ to refer to the maximum degree and vertex cover number of a graph, respectively.

Let $G=(V,E)$ be a graph $X\subseteq V$ a vertex set.
The set $X \subseteq V$ is a \emph{dominating set} if $N[X] = V$.
It is an \emph{independent set} if there is no edge $\{u,v\} \in E$ with $u,v \in X$.
It is an \emph{independent dominating set} if it is both an independent and a dominating set.
It is a \emph{connected dominating set} if it is a dominating set and the graph $G[X]$ is connected.
A \emph{capacitated graph} $G=(V,E,\CAP)$ consists of a graph $(V,E)$ and a \emph{capacity function} $\CAP \colon V \to \N$.
A \emph{capacitated dominating set} in a capacitated graph $G=(V,E,\CAP)$ is a pair $(X,f)$ where $X \subseteq V$ and $f \colon V\setminus X \to X$ such that (i)~$v$ and $f(v)$ are adjacent for all $v\in V \setminus X$ and (ii)~$\abs{f^{-1}(v)} \leq \CAP(v)$ for all $v\in X$.
The size of $(X,f)$ is $\abs{X}$.

A \emph{tree decomposition} of a graph $G=(V,E)$ is a pair $\calT=(T=(W,F),\{X_t\}_{t\in W})$ where $T$ is a tree, $X_t \subseteq V$ for all $t\in W$, $\bigcup_{t \in W} X_t = V$, for each $e \in E$ there is a $t\in W$ such that $e\subseteq X_t$, and for each $v\in V$ the node set $\{t \in W \mid v \in X_t\}$ induces a connected subgraph of $T$.
The \emph{width} of $\calT$ is $\max_{t \in W} \abs{X_t} -1$.
The \emph{treewidth} $\mathrm{tw}(G)$ of $G$ is the minimum width of any tree decomposition of $G$.
A \emph{rooted tree decomposition} consists of a tree decomposition $\calT=(T=(W,F),\{X_t\}_{t\in W})$ along with a designated root $r \in W$.
Given this rooted decomposition and a node $t \in W$, we will use $V_t\subseteq V$ to denote the set of vertices $v$ such that $v$ such $v \in X_{t'}$ and $t'$ is a descendant (possibly $t$ itself) of $t$ in the rooted tree $(T,r)$.
The rooted tree decomposition is \emph{nice} if $X_r=\emptyset$ and $X_t =\emptyset$ for every leaf of $T$ and every other node $t$ of $T$ is of one of three types: (i)~a \emph{forget node}, in which case $t$ has a single child $t'$ and there is a vertex $v\in V$ such that $X_{t'} = X_t \cup \{v\}$, (ii)~an \emph{introduce node}, in which case $t$ has a single child $t'$ and there is a vertex $v\in V$ such that $X_{t} = X_{t'} \cup \{v\}$, or (iii)~a \emph{join node}, in which case $t$ has exactly two children $t_1$ and $t_2$ and $X_t = X_{t_1} = X_{t_2}$.

\subparagraph*{Turing kernelization.}
A \emph{parameterized decision problem} is a set $L\subseteq \Sigma^* \times \mathbb N$.
A \emph{Turing kernelization} of size $f\colon \mathbb N \to \mathbb N$ for a parameterized decision problem $L$ is a polynomial-time algorithm that receives as input an instance $(x,k) \in \Sigma^* \times \mathbb N$ and access to an oracle that, for any instance $(x',k') \in \Sigma^* \times \mathbb N$ with $\abs{x'} + k' \le f(k)$, outputs whether $(x',k') \in L$ in a single step, and decides whether $(x,k) \in L$.
It is a \emph{polynomial Turing kernel} if $f$ is polynomially bounded.

A \emph{polynomial parameter transformation} (PPT) from one parameterized decision problem~$L$ to a second such problem $L'$ is a polynomial-time computable function $f\colon \Sigma^* \times \mathbb N \to \Sigma^* \times \mathbb N$ such that $(x,k) \in L$ if and only $(x',k') \in L'$ and there is a polynomially bounded function $p$ such that $p(k') \le k$ for all $(x,k),(x',k') \in \Sigma^*\times \mathbb N$ with $f(x,k) = (x',k')$.

A \emph{parameterized minimization problem} is defined by a computable function $\calP \colon \Sigma^* \times \mathbb N \times \Sigma^* \to \mathbb R \cup \{\infty, -\infty\}$.
The optimum value for an instance $(I,k) \in \Sigma^* \times \mathbb N$ is $\OPT_{\calP}(I,k) \coloneqq \min_{x \in \Sigma^*} \calP(I,k,x)$.
We will say that a solution $x \in \Sigma^*$ is \emph{$\alpha$-approximate} if $\calP(I,k,x) \le \alpha \cdot \OPT_{\calP}(I,k)$.
In order to simplify notation, we will allow ourselves to write $\calP(I,x)$ instead of $\calP(I,k,x)$ and $\OPT_{\calP}(I)$ instead of $\OPT_{\calP}(I,k)$ if those values do not depend on $k$.
The problems (\textsc{Capacitated/Connected/Independent}) \textsc{Dominating Set} are defined by $\calP(G,X) \coloneqq \abs{X}$ if $X$ is (capacitated/connected/independent) dominating set in $X$ and $\calP(G,X) \coloneqq \infty$, otherwise.
The problem \textsc{Node Steiner Tree} is defined by $\mathrm{NST}((G,T),X) \coloneqq \abs{X}$ if $G[X\cup T]$ is connected and $\mathrm{NST}((G,T),X) \coloneqq \infty$, otherwise.
\textsc{Hitting Set} is a problem whose input $(U,\calS)$ consists of a set $U$ and a family $\calS \subseteq 2^U$ of nonempty sets, a solution $X$ is a subset of $U$, and $\mathrm{HS}(X) \coloneqq \infty$ if there is an $S \in \calS$ such that $X \cap S = \emptyset$ and $\mathrm{HS}(X) \coloneqq \abs{X}$, otherwise.

Let $\alpha \in \mathbb R$ with $\alpha \ge 1$ and let $\calP$ be a parameterized minimization problem.
An \emph{$\alpha$-approximate Turing kernelization} of size $f \colon \mathbb N \to \mathbb N$ for $\calP$ is a polynomial-time algorithm that given an instance $(I,k)$ computes a $(c\alpha)$-approximate solution when given access to an oracle for $P$ which outputs a $c$-approximate solution to any instance $(I',k')$ with $\abs{I'} + k' \le f(k)$ in a single step.
It is an \emph{$\alpha$-approximate polynomial Turing kernelization} if $f$ is polynomially bounded.
Note that the algorithm is not given access to $c$, the approximation factor of the oracle, and is not allowed to depend on $c$.
In practice, it can also be helpful for the approximate Turing kernelization algorithm to receive a witness for the parameter value~$k$ as input.
Similarly to Hols et al.~\cite{Hols2020}, we will assume that our approximate Turing kernelizations for the parameterization treewidth plus maximum degree are given as input a graph $G$ and a nice tree decomposition of width $\mathrm{tw}(G)$.
Alternatively, one could also use the polynomial-time algorithm due to Feige et al.~\cite{Feige2008} to compute a tree decomposition of width~$\bigO(\sqrt{\log \mathrm{tw}(G)}\cdot \mathrm{tw}(G))$ and then use this tree decomposition.
We will also assume that the given tree decomposition is nice, which is not really a restriction, because there is a polynomial-time algorithm that converts any tree decomposition into a nice tree decomposition without changing the width~\cite{Kloks1994}.

\section{Lower bounds}
\label{sec:lb}

\subsection{Exponential-time hypothesis}
\label{sec:eth-lb}

In the following, we will show that several problems do not have an approximate Turing kernelization assuming the exponential-time hypothesis (ETH).
The proof builds on a proof due to Lokshtanov et al.~\cite[Theorem~12]{Lokshtanov2016} showing that \textsc{Hitting Set} parameterized by the size of the universe does not admit a lossy (Karp) kernelization unless the ETH fails.

Let \textsc{3-CNF-SAT} denote the satisfiability problem for Boolean formulas in conjunctive normal form with at most three literals in each clause.
The \emph{exponential-time hypothesis} (ETH)~\cite{Impagliazzo2001} states that there is there is a fixed $c > 0$ such that \textsc{3-CNF-SAT} is not solvable in time  $2^{cn} \cdot (n + m)^{\bigO(1)}$, where $n$ and $m$ are the numbers of variables and clauses, respectively.

Let $\calP$ and $\calP'$ be parameterized minimization problems and $f\colon \Sigma^* \times \mathbb N \to \R_+$ a real-valued function that takes instances of $\calP$ as input.
An \emph{$f$-approximation-preserving polynomial parameter transformation} ($f$-APPT) from $\calP$ to $\calP'$ consists of two algorithms:
\begin{itemize}
	\item a polynomial-time algorithm $\calA$ (the \emph{reduction algorithm}) that receives as input an instance $(I,k)$ for $\calP$ and outputs an instance $(I',k')$ for $\calP'$ with $k' \le p(k)$ for some polynomially bounded function $p$ and
	\item a polynomial-time algorithm $\calB$ (the \emph{lifting algorithm}) that receives as input the instances $(I,k)$, $(I',k')$, where the latter is the output of $\calA$ when given the former, as well as a solution $x$ for $(I,k)$ and outputs a solution $y$ for $(I',k')$ with \[\calP(I,k,y) \le \frac{f( I,k ) \cdot \OPT_\calP(I,k) \cdot \calP'(I',k,x)}{\OPT_{\calP'}(I',k)}.\]
\end{itemize}

We will use the following lemma, which is a weaker version of a result by Nelson~\cite{Nelson2007}.

\begin{lemma}[\cite{Nelson2007}]
	\label{lemma:nelson}
	If there are a constant $c<1$ and a polynomial-time algorithm that computes an $\bigO(2^{\log^c \abs{U}})$-factor approximation for \textsc{Hitting Set}, then the ETH fails.
\end{lemma}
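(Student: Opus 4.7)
The plan is to prove the contrapositive: assume there is a polynomial-time $\bigO(2^{\log^c |U|})$-approximation algorithm for \textsc{Hitting Set} for some constant $c<1$, and derive a $2^{o(n)}$-time algorithm for \textsc{3-CNF-SAT}, contradicting ETH.

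The core technical ingredient is a gap-introducing reduction from \textsc{3-CNF-SAT} on $n$ variables to a \textsc{Hitting Set} instance whose universe has only quasi-polynomial size $|U| = 2^{\mathrm{polylog}(n)}$ but admits a multiplicative gap of at least $2^{\log^{c'}|U|}$ between the yes- and no-cases, for some $c' > c$. I would assemble this reduction from three standard PCP-style ingredients: the PCP theorem, which turns $\phi$ into a constant-gap \textsc{Label Cover} instance of polynomial size; Raz's parallel repetition applied for $k = \log^{\ell} n$ rounds, which drives the Label Cover soundness down to $2^{-\Omega(k)}$ while inflating the instance size to $n^{\bigO(k)}$; and Feige's reduction from \textsc{Label Cover} to \textsc{Set Cover}/\textsc{Hitting Set} via partition codes, whose resulting gap scales with the inverse soundness of the underlying Label Cover. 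Taking $\ell$ large enough (roughly $\ell > c/(1-c)$) ensures simultaneously that the final gap exceeds $2^{\log^{c'}|U|}$ and that $|U|$ remains quasi-polynomial in $n$.

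Running the hypothetical approximation on $(U,\calS)$ then takes time $\mathrm{poly}(|U|) = 2^{\mathrm{polylog}(n)} = 2^{o(n)}$, and the returned solution's cost uniquely identifies which side of the gap the instance lies on, thereby deciding $\phi$ in subexponential time and contradicting ETH. The main obstacle is the quantitative bookkeeping across the three compositions: parallel repetition must amplify soundness quickly enough without blowing the instance up beyond quasi-polynomial size, and Feige's reduction must convert that soundness into a Set Cover gap large enough to swallow the claimed approximation factor. The constraint $c<1$ is essential here: as $c \to 1$, the required $\ell \to \infty$ and the size of $U$ ceases to be subexponential in $n$, so the technique breaks down precisely at the boundary where the claimed inapproximability would turn into polynomial-factor inapproximability.
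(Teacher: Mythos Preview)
The paper does not prove this lemma at all: it is stated with the attribution \texttt{[\textbackslash cite\{Nelson2007\}]} and introduced as ``a weaker version of a result by Nelson,'' with no proof given. There is therefore nothing in the paper to compare your argument against; the lemma is used purely as a black box feeding into \cref{lemma:transfer}.

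That said, your sketch is the right kind of argument for this result and is in the spirit of how such hardness-of-approximation bounds under ETH are established (PCP theorem, parallel repetition with polylogarithmically many rounds, then a Feige-type reduction to \textsc{Set Cover}/\textsc{Hitting Set}). One small point to be careful about: you write that running the approximation algorithm takes time $\mathrm{poly}(\abs{U})$, but a polynomial-time algorithm for \textsc{Hitting Set} runs in time polynomial in the \emph{whole} input, which includes $\abs{\calS}$. You need the reduction to keep $\abs{\calS}$ quasi-polynomial in $n$ as well, not just $\abs{U}$; this is true for the standard constructions but should be stated explicitly. Apart from that, the proposal is a reasonable outline, though the quantitative bookkeeping you flag as the ``main obstacle'' is genuinely where all the work lies, and a full proof would need to pin down the exponents carefully.
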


\begin{lemma}
	\label{lemma:transfer}
	Let $\calP$ be a parameterized minimization problem that satisfies the following two conditions:
	\begin{enumerate}[(a)]
		\item There is an $f$-APPT from $\textsc{Hitting Set}[\abs{U}]$ to $\calP$ with $f(U,\calS) \in \bigO(2^{\log^{c_1} \abs{U}})$ for some constant $c_1 <1$.
		\item There is a constant $c_2<1$ and a polynomial-time algorithm that computes a $\bigO(2^{\log ^{c_2} \abs{I}})$-factor approximation for $\calP$.
	\end{enumerate}
	Then, there is no $\bigO(2^{\log ^{c_3} k})$-approximate polynomial Turing kernelization for $\calP$ for any $c_3<1$, unless the ETH fails. 
\end{lemma}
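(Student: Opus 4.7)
The plan is to argue by contradiction: I assume that for some $c_3 < 1$ the problem $\calP$ admits an $O(2^{\log^{c_3} k})$-approximate polynomial Turing kernelization $\mathcal{K}$, and I then use $\mathcal{K}$ together with the $f$-APPT from condition~(a) and the approximation algorithm from condition~(b) to construct a polynomial-time $O(2^{\log^{c} |U|})$-approximation for \textsc{Hitting Set} for some $c<1$, contradicting \cref{lemma:nelson}.

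Given a \textsc{Hitting Set} instance $(U,\calS)$ with parameter $|U|$, first I would apply the reduction algorithm $\calA$ of the APPT to obtain an instance $(I',k')$ of $\calP$ with $k' \le p(|U|)$ for some fixed polynomial $p$. Next I would run $\mathcal{K}$ on $(I',k')$, simulating each oracle call on a query $(I'',k'')$ by the polynomial-time algorithm from~(b). Since $\mathcal{K}$ is a polynomial Turing kernel, every such query satisfies $|I''| + k'' \le q(k') \le q(p(|U|))$ for some fixed polynomial $q$, so every simulated oracle call returns a solution with approximation factor at most $O(2^{\log^{c_2} q(p(|U|))}) = 2^{O(\log^{c_2} |U|)}$. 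Finally, I would feed the solution produced by $\mathcal{K}$ into the lifting algorithm $\calB$ of the APPT to obtain a solution for the original \textsc{Hitting Set} instance.

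To analyse the approximation factor, I would compose the three contributions. By the guarantee of $\mathcal{K}$, together with the uniform upper bound just derived on the simulated oracle's factor, the solution $x$ produced for $(I',k')$ is $(g(k') \cdot 2^{O(\log^{c_2} |U|)})$-approximate, where $g(k') \in O(2^{\log^{c_3} k'})$. Using $k' \le p(|U|)$, this simplifies to $2^{O(\log^{c_4} |U|)}$ with $c_4 = \max(c_2,c_3) < 1$. The lifting step then multiplies this factor by $f(U,\calS) \in O(2^{\log^{c_1} |U|})$, so the final \textsc{Hitting Set} solution is $2^{O(\log^{c} |U|)}$-approximate with $c = \max(c_1,c_2,c_3) < 1$. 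The whole procedure clearly runs in polynomial time, contradicting \cref{lemma:nelson}.

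The main delicate point is the bookkeeping: one has to verify that $\log^{c_i}(\mathrm{poly}(|U|)) = O(\log^{c_i} |U|)$ for each of the three constants $c_i$, so that the composition of the three approximation factors remains of the form $2^{O(\log^{c} |U|)}$ with $c<1$. The remainder is essentially a black-box composition: $\mathcal{K}$ by definition must work against any oracle and may neither inspect nor depend on the oracle's approximation guarantee, which is exactly what licences plugging in the algorithm from~(b) and then lifting via $\calB$.
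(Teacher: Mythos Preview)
Your proposal is correct and follows essentially the same approach as the paper: reduce via the APPT, run the hypothetical Turing kernel with condition~(b) simulating the oracle, lift back, and bound the composed approximation factor to contradict \cref{lemma:nelson}. The only cosmetic difference is that the paper explicitly picks a constant $c$ strictly between $\max\{c_1,c_2,c_3\}$ and $1$ to absorb the multiplicative constants in the exponent (turning $2^{O(\log^{c_i} |U|)}$ into $O(2^{\log^{c} |U|})$), which is precisely the ``bookkeeping'' you flag as the delicate point.
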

\begin{proof}
	Suppose that $\calP$ satisfies conditions (a) and (b) and admits a $\bigO(2^{\log ^{c_3} k})$-approximate polynomial Turing kernelization of size $\bigO(k^d)$.
	Furthermore, assume that $p(n) \le \bigO(n^{d'})$ where $p$ is the polynomial parameter bound for the reduction algorithm of the $f$-APPT.
	Choose any constant $c$ with $\max\{c_1,c_2,c_3\} < c < 1$ and observe that for any constant $\alpha$ and $i \in\{1,2,3\}$ we have that $2^{\alpha \log^{c_i} n } \le \bigO(2^{\log^c n})$.
	We will give a $\bigO(2^{\log^c \abs{U}})$-approximation algorithm for \textsc{Hitting Set}.
	By \cref{lemma:nelson}, this proves the claim.
	
	The algorithm proceeds as follows.
	Given an instance $I=(U,\calS)$ of \textsc{Hitting Set} as input, it first applies the reduction algorithm of the APPT to obtain an instance $(I',k)$ of $P$.
	Then, it runs the given approximate Turing kernelization on $(I',k)$.
	Whenever this Turing kernelization queries the oracle, this query is answered by running the approximation algorithm given by condition~(b).
	Once the Turing kernelization outputs a solution $X$, the algorithm calls the lifting algorithm of the APPT on $X$, $(I,\abs{U})$, and $(I',k)$.
	The algorithm outputs the solution $Y$ given by the lifting algorithm.
	
	It remains to show that $\abs{Y} \le \bigO (2^{\log^{c_3} \abs{U}} \cdot \OPTHS(I))$.
	First, observe that the $\bigO(2^{\log ^{c_2} \abs{I}})$-factor approximation algorithm is only run on instances $(J,\ell)$ with $\abs{J} \le \bigO(k^d)$, so it always outputs a solution $Z$ with $\calP(J,\ell,Z) \le \bigO(2^{\log ^{c_2} k^d}\cdot\OPT_\calP(J,\ell))$.
	Hence, in the algorithm described above the Turing kernelization is given a $\bigO(2^{d\log ^{c_2} k})$-approximate oracle, so it follows that $\calP(I',k,X) \le \bigO(2^{\log ^{c_2} k} \cdot 2^{d \log ^{c_3} k} \cdot \OPT_\calP(I',k))$.
	Since $k \le \bigO(\abs{U}^{d'})$, it follows that $\calP(I',k,X) \le \bigO(2^{\log ^{c_3} \abs{U}^{d'}} \cdot 2^{d \log ^{c_2} \abs{U}^{d'}} \cdot \OPT_\calP(I',k))$.
	Therefore:
	\begin{align*}
	\abs{Y} & \le \frac{f( I,k ) \cdot \OPTHS(I) \cdot \calP(I',k,X)}{\OPT_{\calP}(I',k)} \le \bigO \left( 2^{\log ^{c_3} \abs{U}^{d'}} \cdot 2^{d \log ^{c_2} \abs{U}^{d'}} \cdot f(I,k) \cdot \OPTHS(I)  \right)\\
	& \le \bigO \left( 2^{\log ^{c_3} \abs{U}^{d'}} \cdot 2^{d \log ^{c_2} \abs{U}^{d'}} \cdot f(I) \cdot \OPTHS(I)  \right)\\
	& \le \bigO \left( 2^{\log ^{c_3} \abs{U}^{d'}} \cdot 2^{d \log ^{c_2} \abs{U}^{d'}} \cdot 2^{\log^{c_1} \abs{U}} \cdot \OPTHS(I)  \right) \\
	& \le \bigO(2^{\log^{c} \abs{U}} \cdot \OPTHS(I)). \qedhere 
	\end{align*}
\end{proof}

With \cref{lemma:transfer}, we can prove approximate Turing kernelization lower bounds for several parameterized minimization problems.

\begin{theorem}
	\label{thm:lb-eth}
	Unless the ETH fails, there are no $\bigO(2^{\log ^c k})$-approximate polynomial Turing kernels, for any $c<1$ and where $k$ denotes the respective parameter, for the following parameterized minimization problems:
	\begin{enumerate}[(i)]
		\item $\textsc{Hitting Set}[\abs{U}]$,
		\item $\textsc{Dominating Set}[\vc]$,
		\item $\textsc{Capacitated Dominating Set}[\vc]$,
		\item $\textsc{Connected Dominating Set}[\vc]$, and
		\item $\textsc{Node Steiner Tree}[V\setminus T]$.
	\end{enumerate}
\end{theorem}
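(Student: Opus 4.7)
The plan is to apply \cref{lemma:transfer} to each of the five problems. Condition~(b) is met in every case by a polynomial-time $O(\log n)$-approximation (which lies in $O(2^{\log^{c_2} n})$ for any $c_2\in(0,1)$): Chv\'atal's algorithm for \textsc{Hitting Set}, the three $O(\log n)$-approximations for \textsc{(Capacitated/Connected) Dominating Set} cited in the introduction, and the Klein--Ravi node-weighted Steiner tree algorithm with unit non-terminal weights for \textsc{Node Steiner Tree}. For condition~(a) I would aim for the strongest possible form, a $1$-APPT, i.e., one in which the reduction preserves the optimum exactly; then $f\equiv 1$ trivially satisfies the $O(2^{\log^{c_1}\abs{U}})$ bound for any $c_1<1$. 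Item (i) is immediate via the identity reduction.

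For (ii)--(iv), from an instance $(U,\calS)$ I would build the graph $G$ with vertex set $\{v_u:u\in U\}\cup\{w_S:S\in\calS\}$, edges $\{v_u,w_S\}$ whenever $u\in S$, and $\{v_u:u\in U\}$ turned into a clique; for the capacitated variant, set $\CAP(v)=\abs{U}+\abs{\calS}$ for every $v$. The set $\{v_u:u\in U\}$ is a vertex cover of $G$ of size $\abs{U}$, so $\vc(G)\le\abs{U}$, matching the PPT parameter requirement. The key structural claim is $\OPT_\calP(G)=\OPTHS(U,\calS)$ in each variant: a hitting set $H$ yields the solution $\{v_u:u\in H\}$ (with an obvious feasible $f$ in the capacitated case), using the clique both to dominate the remaining $v_u$'s and to force connectivity; and any $\calP$-solution $D$ in $G$ converts back into a hitting set of size $\le\abs{D}$ by replacing each $w_S\in D$ with an arbitrary $u\in S$. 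For (v), I would take $V\setminus T=\{v_u:u\in U\}$ and $T=\{w_S:S\in\calS\}\cup\{r\}$, with $v_uw_S\in E$ iff $u\in S$ and $v_ur\in E$ for every $u$; then $\abs{V\setminus T}=\abs{U}$, and $X\subseteq V\setminus T$ makes $G[T\cup X]$ connected iff $\{u:v_u\in X\}$ hits every $S\in\calS$ (via the star centered at~$r$).

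In all four cases the lifting algorithm simply performs the conversion from a $\calP$-solution to a hitting set of no larger size, so with $f\equiv 1$ and $\OPT_\calP=\OPTHS$ the required APPT inequality reduces to $\abs{Y}\le\abs{X}$, which is exactly what the conversion guarantees. The main technical hurdle is really the unified design of the gadget for (ii)--(iv): the clique on $\{v_u:u\in U\}$ has to serve triple duty, acting as a vertex cover of the right size, as the internal dominator that removes the need for auxiliary ``self-dominating'' machinery, and as the source of connectivity for \textsc{Connected Dominating Set}; without it one would have to design three separate reductions with ad-hoc tweaks. Once this is in place, \cref{lemma:transfer} delivers the claimed lower bounds.
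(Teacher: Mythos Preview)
Your proposal is correct and follows essentially the same route as the paper: verify conditions~(a) and~(b) of \cref{lemma:transfer} via the cited $O(\log n)$-approximations and the standard element-clique/set-vertex gadget, yielding a $1$-APPT in each case. The paper's construction differs only cosmetically---it adds one extra dummy vertex to the element clique (and reuses that vertex as the terminal playing the role of your star center~$r$ in the \textsc{Node Steiner Tree} reduction) and takes $\CAP(v)=\deg(v)$ rather than a uniform large capacity; neither change affects the argument.
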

\begin{proof}
	For each problem, we will prove conditions (a) and (b) from \cref{lemma:transfer}.
	\begin{enumerate}[(i)]
		\item
		\begin{enumerate}[(a)]
			\item Immediate.
			\item Chv{\'{a}}tal~\cite{Chvatal1979} gives a $\bigO(\log \abs{\calS})$-factor approximation algorithm.
		\end{enumerate}
		\item
		\begin{enumerate}[(a)]
			\item The following folklore reduction is a $1$-APPT.
			Let $(U,\calS)$ be an instance of hitting set.
			The algorithm $\calA$ creates a graph $G$ as follows.
			For every $x \in U$ and for every $S \in \calS$, $G$ contains vertices $v_x$ and $w_S$, respectively, and $G$ also contains an additional vertex $u$.
			The vertices $\{v_x \mid x \in U\} \cup \{u\}$ form a clique and there is an edge between $v_x$ and $w_S$ if and only if $x\in S$.
			Observe that the vertices $\{v_x \mid x \in U\}$ form a vertex cover in $G$, so clearly $\mathrm{vc}(G) \le \abs{U}$, and that $\OPTHS(U,\mathcal S) = \OPTDS(G)$.
			
			Let $X$ be a dominating set in $G$.
			Let $X'$ be obtained from $X$ by removing $z$ and replacing any $w_S$ by an arbitrary $v_x$ with $x\in S$ (such an element $u$ must exist, as we assume that all $S \in \mathcal{S}$ are non-empty).
			The algorithm $\calB$ outputs $Y \coloneqq \{x \in U \mid v_x \in X'\}$.
			This set is a hitting set, because for any $S \in \mathcal{S}$ one of the following cases applies: (i) $w_S \notin X$, meaning that $X$ contains a neighbor $v_x$ of $w_S$. Then, also $x\in X'$ and, hence $x \in Y$ and $x\in S$. (ii) $w_S \in X$, meaning that $w_S$ is replaced by $v_x$ with $x\in S$ when creating $X'$. Then $x\in Y$ and $x\in S$.
			Finally,
			$\abs{Y} = \abs{X} = \frac{\OPTHS(U,\calS) \cdot \abs{X}}{\OPTDS(G)}$,
			since $\OPTHS(U,\calS) = \OPTDS(G)$.
			\item The $\bigO(\log n)$-factor approximation for \textsc{Hitting Set}~\cite{Chvatal1979} can also be used in a straightforward manner to approximate \textsc{Dominating Set}.
		\end{enumerate}
		\item
		\begin{enumerate}[(a)]
			\item Any instance of \textsc{Dominating Set} can be transformed into an equivalent instance of \textsc{Capacitated Dominating Set} by setting $\CAP(v) \coloneqq \deg (v)$ for all vertices $v$.
			The claim then follows by the same argument as for \textsc{Dominating Set}.
			\item Wolsey~\cite{Wolsey1982} gives a $\bigO(\log \abs{\calS})$~factor approximation for \textsc{Capacitated Hitting Set} which can be adapted to approximate \textsc{Capacitated Dominating Set}. 
		\end{enumerate}
		\item
		\begin{enumerate}[(a)]
			\item The APPT given in (ii) for \textsc{Dominating Set} also works for \textsc{Connected Dominating Set}, because, in the graphs produced by $\calA$, $\OPTCON(G) = \OPTDS(G)$ and the solution output by $\calB$ is always a clique and, therefore, connected.
			\item Guha and Khuller~\cite{Guha1998} give a $\bigO(\log \Delta) \le \bigO(\log n)$-factor approximation for \textsc{Connected Dominating Set}.
		\end{enumerate}
		\item
		\begin{enumerate}[(a)]
			\item The following reduction is essentially the same as the one given by Dom et al.~\cite{Dom2014}
			Let $(U,\calS)$ be an instance of \textsc{Hitting Set}.
			The algorithm $\calA$ creates the graph $G$ as in the reduction for \textsc{Dominating Set} in (ii) and sets $T \coloneqq \{w_s \mid s \in \calS\} \cup \{u\}$.
			Clearly, $\abs{V \setminus T} = \abs{U}$ and it easy to show that $\OPTHS (U,\calS) = \OPTNST(G,T)$. By a similar argument as in (ii), the algorithm $\calB$ can output $\{x \in U \mid v_x \in X\}$ where $X$ is a given solution for the \textsc{Node Steiner Tree} instance $(G,T)$.
			
			\item Klein and Ravi~\cite{Klein1995} give a $\bigO(\log n)$-factor approximation for this problem.
			\qedhere
		\end{enumerate}
	\end{enumerate}
\end{proof}

If $\calC$ is a hereditary class of graphs, then we may define the \textsc{Restricted $\calC$-Deletion} problem as follows:
We are given a graph $G=(V,E)$ and $X\subseteq V$ such that $G- X \in \calC$ and are asked to find a minimum $Y \subseteq X$ such that $G- Y \in \calC$.
For \textsc{Restricted Perfect Deletion}[$\abs{X}$], \textsc{Restricted Weakly Chordal Deletion}[$\abs{X}$], and \textsc{Restricted Wheel-Free Deletion}[$\abs{X}$], reductions given by Heggernes et al.~\cite{Heggernes2013} and Lokshtanov~\cite{Lokshtanov2008} can be shown to be $1$-APPTs.
However, it is open whether they have a $\bigO(2^{\log ^c \abs{I}})$-factor approximation with $c<1$, so we cannot rule out an approximate polynomial Turing kernelization.
However, we can observe that, under ETH, they cannot have both an approximation algorithm with the aforementioned guarantee \emph{and} an approximate polynomial Turing kernelization.

More generally, we can deduce from the proof of \cref{lemma:transfer} the following about any parameterized minimization problem $\calP$ that only satisfies the first condition in \cref{lemma:transfer}:
If we define an \emph{approximate polynomial Turing compression} of a problem $\calP$ into a problem $\calP'$ to be essentially an approximate polynomial Turing kernelization for $\calP$, except that it is given access to an approximate oracle for $\calP'$ rather than $\calP$, then we can rule out (under ETH) an approximate polynomial Turing compression of any problem $\calP$ that satisfies the first condition into any problem $\calP'$ that satisfies the second condition in the same lemma.

\subsection{\texorpdfstring{$\boldsymbol{\mathrm{MK}[2]}$-hardness}{MK[2]-hardness}}
\label{sec:mk2-lb}
The approach described in \cref{sec:eth-lb} is unlikely to work for the problem \textsc{Independent Dominating Set}.
That approach requires a $\bigO(2^{\log^c n})$-factor approximation algorithm with $c<1$ to answer the queries of the Turing kernelization.
However, there is no $\bigO(n^{1-\eps})$-factor approximation for this problem for any $\eps>0$ unless $\mathrm{P}=\mathrm{NP}$~\cite{Halldorsson1993}.

In the following, we will prove that there is no constant-factor approximate polynomial Turing kernelization for \textsc{Independent Dominating Set}[vc], assuming a conjecture by Hermelin et al.~\cite{Hermelin2013} stating that parameterized decision problems that are hard for the complexity class $\mathrm{MK}[2]$ do not admit polynomial (exact) Turing kernelizations.

Let \textsc{CNF-SAT} denote the satisfiability problem for Boolean formulas in conjunctive normal form.
The class $\mathrm{MK}[2]$ may be defined as the set of all parameterized problems that can be reduced with a PPT to $\textsc{CNF-SAT}[n]$  where $n$ denotes the number of variables.\footnote{This is not directly the definition given by Hermelin et al.~\cite{Hermelin2013}, but an equivalent characterization.}

We will prove that an $\alpha$-approximate polynomial Turing kernelization for \textsc{Independent Dominating Set}[vc] implies the existence of a polynomial Turing kernelization for \textsc{CNF-SAT}[$n$].
For this, we will need the following lemma allowing us to translate queries between oracles for \textsc{Independent Dominating Set} and \textsc{CNF-SAT} using a standard self-reduction:

\begin{lemma}
	\label{lemma:self-reduction}
	There is a polynomial-time algorithm that, given as input a graph $G$ and access to an oracle that decides in a single step instances of \textsc{CNF-SAT} whose size is polynomially bounded in the size of $G$, outputs a minimum independent dominating set of $G$.
\end{lemma}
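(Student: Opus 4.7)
The plan is a textbook self-reduction: use the \textsc{CNF-SAT} oracle first to compute the optimum value $k^*$ and then to pin down a single optimal solution vertex by vertex. The key observation is that, since \textsc{Independent Dominating Set} is in \textnormal{NP}, for every $k$ and for every pair of disjoint ``forced-in'' and ``forced-out'' sets $S, T \subseteq V(G)$, the question ``does $G$ have an independent dominating set $X$ with $|X| \le k$, $S \subseteq X$, and $X \cap T = \emptyset$?'' can be encoded as a CNF formula $\varphi_{G,k,S,T}$ of size polynomial in $|V(G)|$.

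I would build $\varphi_{G,k,S,T}$ with one propositional variable $x_v$ per vertex $v$ and the following clauses: for every edge $\{u,v\} \in E(G)$ a clause $(\neg x_u \vee \neg x_v)$ enforcing independence; for every $v \in V(G)$ a clause $\bigvee_{u \in N[v]} x_u$ enforcing domination; for every $v \in S$ the unit clause $(x_v)$ and for every $v \in T$ the unit clause $(\neg x_v)$; and a standard sequential-counter cardinality encoding of $\sum_{v} x_v \le k$ using $\O(|V(G)| \cdot k)$ auxiliary variables and clauses. The total size is polynomial in $|V(G)|$, hence polynomial in $|G|$, so each such query is admissible to the given oracle.

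Phase one determines $k^* = \OPTIND(G)$ by calling the oracle on $\varphi_{G,k,\emptyset,\emptyset}$ for $k = 0, 1, \ldots, |V(G)|$ (or via binary search); the smallest accepting value is $k^*$, and the existence of an independent dominating set in any graph (e.g.\ take any maximal independent set) guarantees $k^* \le |V(G)|$. Phase two constructs a witness of size $k^*$: enumerate the vertices $v_1, \ldots, v_n$ of $G$ in arbitrary order and maintain $S_i, T_i$ with $S_0 = T_0 = \emptyset$. At step $i$, query the oracle on $\varphi_{G, k^*, S_{i-1} \cup \{v_i\}, T_{i-1}}$; if it accepts, set $S_i \coloneqq S_{i-1} \cup \{v_i\}$ and $T_i \coloneqq T_{i-1}$, otherwise set $S_i \coloneqq S_{i-1}$ and $T_i \coloneqq T_{i-1} \cup \{v_i\}$. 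By induction on $i$, the formula $\varphi_{G,k^*,S_i,T_i}$ remains satisfiable, so after processing every vertex $S_n$ is a minimum independent dominating set. The whole procedure runs in polynomial time and makes polynomially many oracle calls, each on an instance polynomial in $|G|$.

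I do not expect a serious obstacle here: the only delicate point is ensuring that the cardinality constraint $\sum_v x_v \le k$ is expressed in CNF of polynomial size (any of the standard sequential-counter, commander, or totalizer encodings suffice), and that the forced inclusions and exclusions are encoded by unit clauses rather than baked into the formula in a way that would require rebuilding it from scratch each step. With those minor care points, the construction and its correctness are routine.
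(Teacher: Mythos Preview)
Your proposal is correct and follows the same high-level plan as the paper: use the \textsc{CNF-SAT} oracle first to determine the optimum value and then to extract a witness by a vertex-by-vertex self-reduction. The only difference is in the self-reduction step. You encode ``forced-in'' and ``forced-out'' sets $S,T$ explicitly via unit clauses in a hand-built CNF, whereas the paper invokes the abstract Cook--Levin reduction as a black box and exploits the structural fact that if $u$ lies in a minimum independent dominating set $X$ of $H$, then $X\setminus\{u\}$ is a minimum independent dominating set of $H-N[u]$; thus each step simply queries the oracle on $R(H-N[u],\ell-1)$ for the current graph $H$ and budget $\ell$. Your version is fully generic (it would work for any \textnormal{NP} search problem), while the paper's version is slightly slicker here because it avoids the explicit cardinality encoding and the forcing machinery by shrinking the graph instead. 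Either route proves the lemma without difficulty.
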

\begin{proof}
	The decision version of \textsc{Independent Dominating Set}, in which one is given a graph $H$ and an integer $k$ and is asked to decide whether $H$ contains an independent dominating set of size at most $k$, is in \NP{} and \textsc{CNF-SAT} is \NP-hard, so there is a polynomial-time many-one reduction from \textsc{Independent Dominating Set} to \textsc{CNF-SAT}.
	For any graph $H$ and integer $k$ let $R(H,k)$ denote the instance of \textsc{CNF-SAT} obtained by applying this reduction to $(H,k)$.
	
	Let $n$ be the number of vertices in $G$.
	We first determine the size of a minimum independent dominating set in $G$ by querying the oracle for \textsc{CNF-SAT} on the instance $R(G,k)$ for each $k \in [n]$.
	Observe that the size of $R(G,k)$ is polynomially bounded in the size of $G$.
	Hence, we may input this instance to the oracle.
	Let $k_0$ be the smallest value of $k$ for which this query returns yes.
	
	We must then construct an independent dominating set of size $k_0$ in $G$.
	We initially set $\ell \coloneqq k_0$, $H\coloneqq G$, and $S \coloneqq \emptyset$ and perform the following operation as long as $\ell > 0$.	
	For each vertex $u$ in $H$, we query the oracle on the instance $R(H-N[u],\ell-1)$.
	If this query returns yes, then we add $u$ to $S$ and set $H \coloneqq H - N[u]$ and $\ell \coloneqq \ell -1$.
	Once $\ell = 0$, we output $S$.
	
	We claim that this procedure returns an independent dominating set of size $k_0$ if $k_0$ is the size of a minimum independent dominating set in $G$.
	Let $X$ be a minimum independent dominating set in $G$ and $u \in X$.
	Then, $X \setminus \{u\}$ is a minimum dominating set of size $k_0-1$ in $G - N[u]$ and the claim follows inductively.
\end{proof}

\begin{theorem}
	\label{thm:lb-ind}
	If, for any $\alpha \ge 1$, there is an $\alpha$-approximate polynomial Turing kernelization for \textsc{Independent Dominating Set}[vc], then there is a polynomial Turing kernelization for \textsc{CNF-SAT}[$n$].
\end{theorem}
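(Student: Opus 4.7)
The plan is to combine a gap-introducing polynomial parameter transformation from $\textsc{CNF-SAT}[n]$ to $\textsc{Independent Dominating Set}[\mathrm{vc}]$ with the exact self-reduction provided by \cref{lemma:self-reduction}. Given a CNF formula $\phi$ on $n$ variables, I would construct a graph $G_\phi$ as follows: for each variable $v_i$, add four vertices $t_i, f_i, p_i, q_i$ and edges $\{t_i, f_i\}$, $\{t_i, p_i\}$, $\{f_i, q_i\}$; for each clause $C_j$, add $N \coloneqq \lceil 2\alpha n \rceil + 1$ vertices $c_j^{(1)}, \ldots, c_j^{(N)}$, each made adjacent to every literal-vertex satisfying $C_j$. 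The set $\{t_i, f_i\}_{i \in [n]}$ is a vertex cover of $G_\phi$, so $\mathrm{vc}(G_\phi) \le 2n$, and the total size of $G_\phi$ is polynomial in the input size.

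The pendants force every independent dominating set $X$ to satisfy $X \cap \{t_i, p_i\} \ne \emptyset$ and $X \cap \{q_i, f_i\} \ne \emptyset$, while the edges $\{t_i, p_i\}$ and $\{f_i, q_i\}$ make these intersections singletons, so each variable gadget contributes exactly $2$ to $|X|$ and the base cost is $2n$. The three resulting configurations $\{t_i, q_i\}$, $\{p_i, f_i\}$, $\{p_i, q_i\}$ correspond to assigning $v_i$ the value True, False, or ``neither.'' If $\phi$ is satisfiable, then extending a satisfying assignment to such a configuration yields an independent dominating set of size exactly $2n$, because every $c_j^{(r)}$ is dominated by a satisfying literal in $X$. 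If $\phi$ is unsatisfiable, then every ``configuration'' (even with ``neither'' values) leaves some clause $C_{j^*}$ without a satisfying literal in $X$; since the vertices $c_{j^*}^{(r)}$ are pairwise non-adjacent and their only other neighbors (the satisfying literals of $C_{j^*}$) are absent from $X$, all $N$ of them must themselves be added to $X$. Thus $\OPTIND(G_\phi) \ge 2n + N > 2\alpha n$, while $\OPTIND(G_\phi) \le 2n$ in the satisfiable case, giving a gap strictly larger than $\alpha$.

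Finally, I would run the assumed $\alpha$-approximate polynomial Turing kernelization on $(G_\phi, \mathrm{vc}(G_\phi))$, answering each of its oracle queries \emph{exactly}: every query is to an \textsc{Independent Dominating Set} instance of size polynomial in $\mathrm{vc}(G_\phi)$, hence polynomial in $n$, and \cref{lemma:self-reduction} translates each such query into polynomially many polynomially-sized \textsc{CNF-SAT} queries, which fit within the budget of a polynomial Turing kernelization for $\textsc{CNF-SAT}[n]$. With an exact oracle ($c = 1$) the returned solution $X$ satisfies $|X| \le \alpha \cdot \OPTIND(G_\phi)$, so $|X| \le 2\alpha n$ if and only if $\phi$ is satisfiable, and reporting this threshold test yields the desired Turing kernelization. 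The main subtlety is the gap analysis in the unsatisfiable case: one must rule out any independent dominating set that uses clause vertices to shrink the variable-gadget contribution below $2n$, but this is prevented by the observation that $p_i$ and $q_i$ can only be dominated from within their own variable gadget, so the gadget's contribution is exactly $2$ per variable regardless of which clause vertices are placed in $X$.
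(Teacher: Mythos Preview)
Your proposal is correct and follows essentially the same strategy as the paper: a gap-introducing polynomial parameter transformation from \textsc{CNF-SAT}$[n]$ to \textsc{Independent Dominating Set}$[\mathrm{vc}]$, combined with the self-reduction of \cref{lemma:self-reduction} to simulate an exact IDS oracle using a CNF-SAT oracle, and then a threshold test on the size of the returned solution.

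The only noteworthy difference is in the gadget. The paper uses Irving's simpler construction: just the two literal vertices $v_i,\overline{v_i}$ joined by an edge, together with $s=\lceil\alpha n\rceil+1$ copies of each clause vertex; this already yields $\OPTIND\le n$ in the satisfiable case and $\OPTIND\ge s>\alpha n$ in the unsatisfiable case, without needing any lower bound on the variable-gadget contribution. Your pendants $p_i,q_i$ force each variable gadget to contribute exactly~$2$, which makes the lower-bound bookkeeping a bit more explicit but is not actually required---the paper's argument simply observes that in the unsatisfiable case some clause's $s$ copies must all enter the independent dominating set, which alone exceeds $\alpha n$. Your version works fine; it is just slightly more elaborate than necessary.
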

\begin{proof}
	Assume that there is an $\alpha$-approximate Turing kernelization for \textsc{Independent Dominating Set}[vc] whose size is bounded by the polynomial~$p$.
	We will give a polynomial Turing kernelization for \textsc{CNF-SAT}[$n$].
	
	Let the input be a formula $F$ in conjunctive normal form over the variables $x_1,\ldots,x_n$ consisting of the clauses $C_1,\ldots,C_m$.
	First, we compute a graph $G$ on which we then run the approximate Turing kernelization for \textsc{Independent Dominating Set}.
	The construction of the graph $G$ in the following is due to Irving~\cite{Irving1991}.
	
	Let $s \coloneqq \lceil \alpha \cdot n \rceil+1$.
	The graph $G=(V,E)$ contains vertices $v_1,\ldots,v_n$ and $\overline{v_1},\ldots,\overline{v_n}$, representing the literals that may occur in $F$.
	Additionally, for each $j \in [m]$, there are $s$ vertices $w_j^1,\ldots,w_j^s$ representing the clause $C_j$.
	For each $i \in [n]$, there is an edge between $v_i$ and $\overline{v_i}$.
	There is also an edge between $v_i$ and $w_j^\ell$ for all $\ell \in [s]$ if $x_i \in C_j$ and an edge between $\overline{v_i}$ and $w_j^\ell$ for all $\ell \in [s]$ if $\neg x_i \in C_j$.
	The intuition behind this construction is as follows:
	In $G$ any independent dominating set may contain at most one of $v_i$ and $\overline{v_i}$ for each $i \in [n]$, so any such set represents a partial truth assignment of the variables $x_1,\ldots,x_n$.
	If $F$ is satisfiable, then $G$ contains an independent dominating set of size $n$.
	Conversely, if $F$ is not satisfiable, then any independent dominating set must contain $w^1_j,\ldots,w^s_j$ for some $j\in [m]$, so it must have size at least $s > \alpha \cdot n$, thus creating a gap of size greater than $\alpha$ between yes and no instances.
	Moreover, $\{v_i,\overline{v_i} \mid i \in [n]\}$ is a vertex cover in $G$, so the vertex cover number of $G$ is polynomially bounded in $n$.
	
	The Turing kernelization for \textsc{CNF-SAT} proceeds in the following manner.
	Given the formula $F$, it first computes the graph $G$ and runs the $\alpha$-approximate Turing kernelization for \textsc{Independent Dominating Set} on $G$.
	Whenever the $\alpha$-approximate Turing kernelization queries the oracle on an instance~$G'$, this query is answered using the algorithm given by \cref{lemma:self-reduction}.
	Observe that the size of $G'$ is polynomially bounded in the vertex cover number of $G$, which in turn is polynomially bounded in $n$, so the oracle queries made by this algorithm are possible.
	Let $X$ be the independent dominating set for $G$ output by the approximate Turing kernelization.
	Since this Turing kernelization is given access to a $1$-approximate oracle, $\abs{X} \le \alpha \cdot \OPTIND(G)$.
	We claim that $F$ is satisfiable if and only if $\abs{X} \le \alpha\cdot n$.
	With this claim, the Turing kernelization can return yes if and only if this condition is met.
	
	If $F$ is satisfiable and $\varphi \colon \{x_1,\ldots,x_n\} \to \{0,1\}$ is a satisfying truth assignment, then $Y \coloneqq \{v_i \mid \varphi(x_i) = 1 \} \cup \{\overline{v_i} \mid \varphi(x_i) = 0\}$ is an independent dominating set in $G$.
	Hence, $\abs{X} \le \alpha \cdot \OPTIND(G) \le \alpha \cdot \abs{Y} = \alpha \cdot n$.
	Conversely, suppose that $F$ is not satisfiable.
	For each $i \in [n]$, the set $X$ may contain at most one of the vertices $v_i$ and $\overline{v_i}$.
	Consider the partial truth assignment with $\varphi (x_i) \coloneqq 1$ if $v_i \in X$ and $\varphi(x_i) \coloneqq 0$ if $\overline{v_i}\in X$.
	Because $F$ is unsatisfiable there is a clause $C_j$ that is not satisfied by $\varphi$.
	Hence, $X$ must contain the vertices $w_j^1,\ldots,w_j^s$.
	Therefore, $\abs{X} \ge s > \alpha \cdot n$.
\end{proof}

\section{\texorpdfstring{Turing kernelizations for parameter $\boldsymbol{\mathrm{tw} + \Delta}$}{Turing kernelizations for parameter treewidth plus maximum degree}}

In this section, we will prove that the domination problems for which we proved lower bounds when parameterized by the vertex cover number do have $(1+\eps)$-approximate polynomial Turing kernels when parameterized by treewidth plus maximum degree.
The following lemma is a generalization of \cite[Lemma~11]{Hols2020} and can be proved in basically the same way.

\begin{lemma}
	\label{lemma:find-node}
	Let $G$ be a graph with $n$ vertices, $\calT$ be a nice tree decomposition of width $w$, and $s \le n$.
	Then, there is a node $t$ of $\calT$ such that $s \le \abs{V_t} \le 2s$.
	Moreover, such a node $t$ can be found in polynomial time. 
\end{lemma}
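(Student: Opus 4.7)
The plan is to walk from the root downward while maintaining the invariant that the current node $t$ satisfies $|V_t| \ge s$, stopping as soon as we also have $|V_t| \le 2s$. To analyze this walk, I would first record how $|V_t|$ changes when passing from a node $t$ to one of its children, using the definitions of a nice decomposition together with the connectedness property of tree decompositions: for a leaf, $V_t = X_t = \emptyset$; for a forget node with child $t'$ (where $X_{t'} = X_t \cup \{v\}$), the vertex $v$ already lies in $V_{t'}$, so $V_t = V_{t'}$; for an introduce node with child $t'$ (where $X_t = X_{t'} \cup \{v\}$), the introduced vertex $v$ cannot occur anywhere in $V_{t'}$ by connectedness, so $|V_t| = |V_{t'}|+1$; for a join node with children $t_1,t_2$, we have $V_{t_1} \cap V_{t_2} = X_t$ and hence $|V_{t_1}| + |V_{t_2}| = |V_t| + |X_t|$. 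Note also that $|V_r| = n \ge s$, so the invariant holds at the root.

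Now the walk is straightforward. If $|V_t| \le 2s$, return $t$. Otherwise $|V_t| \ge 2s+1$, and I would argue that some child $t'$ of $t$ still has $|V_{t'}| \ge s$: if $t$ is a forget node then $|V_{t'}| = |V_t| > 2s \ge s$; if $t$ is an introduce node then $|V_{t'}| = |V_t|-1 \ge 2s \ge s$; if $t$ is a join node then the larger of $V_{t_1}, V_{t_2}$ has size at least $\lceil (|V_t|+|X_t|)/2\rceil \ge \lceil |V_t|/2\rceil \ge s+1$. A leaf cannot occur here, because $|V_t|=0$ would contradict $|V_t|>2s$. Moving to such a child preserves the lower bound of $s$, and since the walk strictly descends in a finite tree it must terminate, necessarily at a node with $|V_t|\le 2s$.

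For the algorithmic statement, I would precompute $|V_t|$ for every node $t$ in a single bottom-up pass in linear time using the transitions above; then the walk itself visits at most one node per level of the decomposition and requires only $O(1)$ work per step beyond inspecting the two children of a join node. The total running time is polynomial (indeed linear) in the size of the decomposition.

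The only mildly delicate step is the join case, where one has to invoke the connectedness property of the decomposition to get the identity $V_{t_1}\cap V_{t_2}=X_t$, ensuring that the larger child is at least half the size of $V_t$ rather than an arbitrary fraction of it; this is exactly what prevents the walk from overshooting below $s$. The rest is bookkeeping, so no further obstacle is expected.
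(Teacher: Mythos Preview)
Your proof is correct and follows the standard descent argument. The paper does not actually give its own proof of this lemma; it merely states that the result generalizes \cite[Lemma~11]{Hols2020} and ``can be proved in basically the same way,'' and your walk from the root---using that introduce and forget nodes change $|V_t|$ by at most one while a join node always has a child with at least $\lceil |V_t|/2\rceil$ vertices---is precisely that standard argument.
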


\subsection{Dominating Set}
We start with \textsc{Dominating Set}.

\begin{lemma}
	\label{lemma:ds}
	Let $G=(V,E)$ be a graph.
	\begin{enumerate}[(i)]
		\item\label{lemma:dsI} If $\Delta$ is the maximum degree of $G$, then $\OPTDS(G) \ge \frac{\abs{V}}{\Delta +1}$.
		\item\label{lemma:dsII} If $A,B,C \subseteq V$ with $A\cup C = V$, $A\cap C = B$, and there are no edges between $A\setminus B$ and $C \setminus B$, then		
		$\OPTDS(G) \ge \OPTDS(G[A]) + \OPTDS(G[C]) - 2\abs{B}$.
	\end{enumerate}
\end{lemma}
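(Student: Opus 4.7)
The plan is to handle the two parts of the lemma separately; both are bounding arguments that follow from pigeonhole-type and construction-type reasoning.

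For part (\ref{lemma:dsI}), the plan is an elementary counting argument. Any vertex $v$ in a dominating set $X$ dominates at most $v$ together with its at most $\Delta$ neighbors, i.e.\ at most $\Delta+1$ vertices in total. Since every vertex in $V$ must be dominated, the union of closed neighborhoods of vertices in $X$ must cover $V$, which gives $|V| \le |X| \cdot (\Delta+1)$, and rearranging yields the claim.

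For part (\ref{lemma:dsII}), the plan is to take an optimum dominating set $X$ of $G$ and from it explicitly build dominating sets for $G[A]$ and $G[C]$, then compare cardinalities. The natural candidates are
\[
X_A \coloneqq (X \cap (A \setminus B)) \cup B \quad\text{and}\quad X_C \coloneqq (X \cap (C \setminus B)) \cup B.
\]
First I verify that $X_A$ dominates $G[A]$: vertices in $B$ lie in $X_A$ directly; a vertex $v \in A \setminus B$ that is not in $X_A$ must be dominated by some $u \in X$ in $G$, and since there are no edges between $A\setminus B$ and $C \setminus B$, the vertex $u$ cannot lie in $C \setminus B$, so $u \in A$ and hence $u \in X_A$ (whether $u \in B$ or $u \in A\setminus B$). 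The argument for $X_C$ is symmetric.

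Once both are verified as dominating sets of the respective induced subgraphs, I bound $\OPTDS(G[A]) + \OPTDS(G[C]) \le |X_A| + |X_C|$, and observe that $A\setminus B$ and $C\setminus B$ are disjoint, so $|X \cap (A\setminus B)| + |X \cap (C\setminus B)| = |X \setminus B| \le |X|$. This yields
\[
\OPTDS(G[A]) + \OPTDS(G[C]) \le |X| + 2|B| = \OPTDS(G) + 2|B|,
\]
which is the claimed inequality. I do not foresee any real obstacle; the only care needed is to put all of $B$ into both $X_A$ and $X_C$ (rather than restricting $X$ to $A$ and $C$ respectively), since otherwise vertices in $B$ that happened to be dominated through $A\setminus B$ in $G$ might fail to be dominated in $G[C]$, and vice versa. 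Using $B$ itself as a ``buffer'' fixes this at a controlled cost of $2|B|$.
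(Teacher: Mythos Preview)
Your proposal is correct and takes essentially the same approach as the paper: for (i) the same counting argument, and for (ii) the paper also augments an optimum dominating set $X$ by $B$ on each side (its sets $Y=(X\cap A)\cup B$ and $Z=(X\cap C)\cup B$ coincide with your $X_A$ and $X_C$, since $B\subseteq A$ and $B\subseteq C$) and then compares cardinalities. Your verification that $X_A$ dominates $G[A]$ is spelled out more carefully than in the paper, and your disjointness-based count is slightly cleaner, but the argument is the same.
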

\begin{proof}
	\begin{enumerate}[(i)]
		\item Every vertex can only dominate its at most $\Delta$ neighbors and itself.
		\item Let $X$ be a dominating set in $G$ of size $\OPTDS(G)$.
		Then $Y \coloneqq (X\cap A) \cup B$ and $Z \coloneqq (X\cap C) \cup B$ are dominating sets in $G[A]$ and $G[C]$, respectively.
		Hence,
		\begin{align*}
		\OPTDS(G) &= \abs{X} = \abs{X\cap A} + \abs{X\cap C} - \abs{X\cap B} \\
		&\ge \abs{Y} - \abs{B\setminus X} + \abs{Z} - \abs{B}\\
		&\ge \OPTDS(G[A]) + \OPTDS(G[C]) - 2\abs{B}.\qedhere
		\end{align*}
	\end{enumerate}
\end{proof}

The kernelization algorithm for \textsc{Dominating Set} uses \cref{lemma:ds}, but is simpler and otherwise similar to the one for \textsc{Capacitated Dominating Set} and is deferred to the appendix.

\begin{theorem}
	\label{thm:ub-ds}
	For every $\eps > 0$, there is a $(1+\eps)$-approximate Turing kernelization for \textsc{Dominating Set} with $\bigO(\frac{1+\eps}{\eps} \cdot \mathrm{tw} \cdot \Delta)$ vertices.
\end{theorem}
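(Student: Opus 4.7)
The plan is to iteratively carve pieces off $G$ using the tree decomposition. Fix a threshold $s = \Theta\bigl(\tfrac{1+\eps}{\eps}(\mathrm{tw}+1)(\Delta+1)\bigr)$. Starting from $G^{(1)} := G$ with its given nice tree decomposition, I repeatedly apply \cref{lemma:find-node} to find a node $t$ with $s \le \abs{V_t} \le 2s$, set $A := V_t$ and $B := X_t$, and note that $B$ separates $A \setminus B$ from $V^{(i)} \setminus A$ in $G^{(i)}$. I then query the oracle on the stand-alone \textsc{Dominating Set} instance $G^{(i)}[A]$ (which has at most $2s$ vertices), obtain a dominating set $S_i$ of $G^{(i)}[A]$, append $S_i$ to the output, delete the interior $A \setminus B$ from the current graph (keeping $B$ because its vertices may still be needed to dominate the remainder), and re-nice the reduced tree decomposition via the standard polynomial-time procedure. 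This is repeated until fewer than $2s$ vertices remain, at which point the oracle is queried once more on the entire residual graph to yield $S_{r+1}$. The final output is $\bigcup_{i=1}^{r} S_i \cup S_{r+1}$.

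Correctness holds because every vertex of $G$ is either eventually discarded inside some $V_{t^{(i)}} \setminus X_{t^{(i)}}$, in which case it is dominated by $S_i \subseteq V_{t^{(i)}}$, or survives to the final query and is dominated by $S_{r+1}$; edges incident to surviving vertices are preserved throughout, so no domination is lost along the way. For the approximation guarantee, let $c$ denote the (unknown) approximation factor of the oracle. Iterating part (ii) of \cref{lemma:ds} across the sequence of splits telescopes to
\[
\OPTDS(G) \ge \sum_{i=1}^{r} \OPTDS\bigl(G^{(i)}[V_{t^{(i)}}]\bigr) + \OPTDS\bigl(G^{(r+1)}\bigr) - 2r(\mathrm{tw}+1),
\]
so the size of the returned set is at most $c\bigl(\OPTDS(G) + 2r(\mathrm{tw}+1)\bigr)$. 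Each split removes at least $s - (\mathrm{tw}+1) = \Omega(s)$ vertices, giving $r = \bigO(\abs{V}/s)$, and part (i) of \cref{lemma:ds} supplies $\OPTDS(G) \ge \abs{V}/(\Delta+1)$. With $s$ chosen as above, the additive error $2r(\mathrm{tw}+1)$ is at most $\eps \cdot \OPTDS(G)$, delivering the claimed $(1+\eps)c$-approximation.

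The main obstacle is a careful balancing act between three quantities: the per-iteration separator penalty $2(\mathrm{tw}+1)$, the number of iterations $r = \bigO(\abs{V}/s)$, and the lower bound $\OPTDS(G) \ge \abs{V}/(\Delta+1)$. Matching these against the target multiplicative error $\eps$ is exactly what forces $s$ to scale as $\tfrac{1+\eps}{\eps}\mathrm{tw}\Delta$, and hence the oracle instances to have the stated size. A minor secondary issue is maintaining a valid nice tree decomposition of the residual graph at every iteration so that \cref{lemma:find-node} can be reapplied; this is handled by stripping the removed vertices from each bag and invoking the standard niceness-restoring procedure mentioned in \cref{sec:prelim}.
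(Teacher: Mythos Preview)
Your proposal is correct and follows essentially the same approach as the paper: repeatedly carve off a piece $V_t$ of size $\Theta(s)$ via \cref{lemma:find-node}, query the oracle on $G[V_t]$, recurse on $G-(V_t\setminus X_t)$, and use \cref{lemma:ds} for the analysis. The only difference is presentational: the paper writes the algorithm recursively and, in the inductive step, charges the separator penalty $2\abs{X_t}$ locally against the slack $c\eps\cdot\OPTDS(G[V_t]) \ge c\eps\cdot\frac{s}{\Delta+1}$ of the current piece, whereas you unroll the recursion, telescope all the penalties into $2r(\mathrm{tw}+1)$, and charge them globally against $\eps\cdot\OPTDS(G) \ge \eps\cdot\frac{\abs{V}}{\Delta+1}$ using $r = \bigO(\abs{V}/s)$; both accountings yield the same threshold $s = \Theta\bigl(\frac{1+\eps}{\eps}(\mathrm{tw}+1)(\Delta+1)\bigr)$.
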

\begin{algorithm}[t]
	\DontPrintSemicolon
	\SetKwInOut{Input}{input}\SetKwInOut{Output}{output}
	\Input{A graph $G=(V,E)$, nice tree decomposition $\calT$ of width $\mathrm{tw}$, $\eps>0$}
	$s \leftarrow 2\cdot\frac{1+\eps}{\eps} \cdot (\mathrm{tw}+1) \cdot (\Delta+1)$\;
	\eIf{$\abs{V} \leq s$}{
		Apply the $c$-approximate oracle to $G$ and output the result.\; \label{l:ds1}
	}{
		 Use \cref{lemma:find-node} to find a node $t$ in $\calT$ such that $s \le \abs{V_t} \le 2s$.\;
		 Apply the $c$-approximate oracle to $G[V_t]$ and let $S_t$ be the solution output by the oracle.\;
		 Let $\calT'$ be the tree obtained by deleting the subtree rooted at $t$ except for the node $t$ from $\calT$.\;
		 Apply this algorithm to $(G - (V_t\setminus X_t),\calT',\eps)$ and let $S'$ be the returned solution.\;
		 Return $S'_t \cup S'$.\;\label{l:ds2}
	}

	\caption{A $(1+\eps)$-approximate polynomial Turing kernelization for \textsc{Dominating Set} parameterized by $\mathrm{tw}+\Delta$}
	\label{alg:ds}
\end{algorithm}
\begin{proof}	
	Consider \cref{alg:ds}.
	This algorithm always returns a dominating set of $G$.
	If the algorithm terminates in line~\ref{l:ds1}, then this is true because the oracle always outputs a dominating set.
	If it terminates in line~\ref{l:ds2}, then let $v \in V$ be an arbitrary vertex.
	If $v \in V_t$, then $v$ is dominated by a vertex in $S_t$, because $S_t$ is a dominating set in $G[V_t]$.
	If $v \in V\setminus V_t$, then $v$ is dominated by a vertex in $S'$.
	
	The algorithm runs in polynomial time.
	
	Finally, we must show that the solution output by the algorithm contains at most $c\cdot (1+\eps)\cdot\OPTDS(G)$ vertices.
	We prove the claim by induction on the number of recursive calls.
	If there is no recursive call, the algorithm terminates in line~\ref{l:ds1} and the solution contains at most $c\cdot\OPTDS(G)$ vertices.
	Otherwise, by induction:
	\begin{align*}
	\abs{S'_t \cup S'} &\le \abs{S'_t} + \abs{S'} 
	\le c\cdot\OPTDS(G[V_t]) + \abs{S'}\\
	&= c\cdot(1+\eps)\cdot\OPTDS(G[V_t]) - c\cdot \eps \cdot \OPTDS(G[V_t]) + \abs{S'}\\
	&\stackrel{\dag}{\le} c\cdot(1+\eps)\cdot\OPTDS(G[V_t]) - \frac{c\cdot \eps \cdot \abs{V_t}}{\Delta + 1} + \abs{S'}\\ 
	&\le c\cdot(1+\eps)\cdot\OPTDS(G[V_t]) -  2\cdot c\cdot(1+\eps)\cdot(\mathrm{tw}+1) + \abs{S'}\\
	&\le c\cdot(1+\eps)\cdot\OPTDS(G[V_t]) - 2\cdot c\cdot(1+\eps)\cdot\abs{X_t} + \abs{S'}\\
	& \le c\cdot(1+\eps)\cdot\OPTDS(G[V_t]) - 2\cdot c\cdot(1+\eps)\cdot\abs{X_t} + c\cdot(1+\eps)\cdot\OPTDS(G-V_t)\\
	& = c\cdot(1+\eps)\cdot(\OPTDS(G[V_t]) - 2\abs{X_t} + \OPTDS(G-V_t))\\
	&\stackrel{\ddag}{\le} c\cdot(1 + \eps) \cdot \OPTDS(G).
	\end{align*}
	Here, the inequality marked $\dag$ follows from \creflemmapart{lemma:ds}{lemma:dsI} and the one marked $\ddag$ follows from \creflemmapart{lemma:ds}{lemma:dsII} with $A=(V\setminus V_t)\cup X_t$, $B=X_t$, and $C=V_t \setminus X_t$.
\end{proof}

\subsection{Capacitated Dominating Set}
Next, we consider \textsc{Capacitated Dominating Set}.

\begin{lemma}
	\label{lemma:capds}
	Let $G = (V,E,\CAP)$ be a capacitated graph with maximum degree $\Delta$ and $A,B,C \subseteq V$ such that $A\cup C = V$, $A\cap C = B$, and there are no edges from $A\setminus B$ to $C \setminus B$.
	\begin{enumerate}[(i)]
		\item\label{lemma:capdsI} $\OPTCAP(G) \ge \OPTCAP(G[A]) + \OPTCAP(G[C]) - 2\abs{B}$.
		\item\label{lemma:capdsII} Given capacitated dominating sets $(X,f)$ and $(Y,g)$ in $G[A]$ and $G[C]$, respectively, one can in construct in polynomial time a capacitated dominating set for $G$ of size at most $\abs{X} + \abs{Y} + (\Delta+1) \cdot \abs{B}$.
	\end{enumerate}
\end{lemma}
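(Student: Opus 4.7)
For part (i), my plan is to mimic the argument in \creflemmapart{lemma:ds}{lemma:dsII}. Starting from an optimal capacitated dominating set $(X,f)$ of $G$, I would set $Y \coloneqq (X \cap A) \cup B$ and $Z \coloneqq (X \cap C) \cup B$ and equip them with the restrictions of $f$ to $A \setminus Y$ and $C \setminus Z$, respectively. The no-edges condition ensures that for $v \in A \setminus B$ with $v \notin X$, the dominator $f(v) \in X$ must lie in $A$, hence in $X \cap A \subseteq Y$; the symmetric statement holds for $Z$. Capacity constraints are preserved trivially under restriction. A short inclusion-exclusion computation then gives $|Y|+|Z| = |X \cap A|+|X \cap C|+2|B \setminus X| = |X| + 2|B| - |X \cap B| \le |X|+2|B|$, which, combined with $|Y| \ge \OPTCAP(G[A])$ and $|Z| \ge \OPTCAP(G[C])$, yields the inequality in (i).

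For part (ii), I would take $X' \coloneqq X \cup Y \cup N[B]$, whose cardinality is at most $|X|+|Y|+(\Delta+1)|B|$ because $|N[B]| \le (\Delta+1)|B|$. The combined assignment $h \colon V \setminus X' \to X'$ is defined by $h(v) \coloneqq f(v)$ for $v \in A \setminus B$ and $h(v) \coloneqq g(v)$ for $v \in C \setminus B$. Since $B \subseteq X'$, each $v \in V \setminus X'$ falls into exactly one of these two disjoint cases, and since $v \notin N[B]$ the dominator chosen by $f$ or $g$ avoids $B$ entirely, so it lies in $(X \cup Y) \setminus B \subseteq X'$. Adjacency of $v$ and $h(v)$ is inherited from $f$ or $g$.

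The main obstacle is verifying the capacity constraints, since a naive union of $f$ and $g$ could overload a shared center in $X \cap Y$. The decisive observation is that $X \cap Y \subseteq A \cap C = B$, so the sets $X \setminus B$ and $Y \setminus B$ are disjoint. Consequently, for any $u \in X \setminus B$ only preimages from $A \setminus B$ under $f$ can target $u$, giving $h^{-1}(u) \subseteq f^{-1}(u)$ and thus $|h^{-1}(u)| \le \CAP(u)$; the symmetric argument handles $u \in Y \setminus B$. Vertices $u \in B$ receive nothing under $h$ by construction, and vertices in $N[B] \setminus (X \cup Y)$ are not in the image of $h$ because they lie in neither $X$ nor $Y$. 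Inflating $X'$ by $N[B]$ is precisely what decouples the two partial assignments outside $B$, and this is the step that costs the factor $\Delta+1$.
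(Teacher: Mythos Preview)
Your proposal is correct and follows essentially the same approach as the paper: for (i) you take $Y=(X\cap A)\cup B$ and $Z=(X\cap C)\cup B$ with the restricted assignment, exactly as the paper does, and your inclusion--exclusion count is equivalent to theirs; for (ii) you take $X'=X\cup Y\cup N[B]$ and define $h$ piecewise from $f$ and $g$, which is again the paper's construction. Your write-up is in fact more thorough than the paper's on the capacity verification---the paper merely says ``one can easily verify that this is a capacitated dominating set,'' whereas you spell out why $X\cap Y\subseteq B$ forces $h^{-1}(u)\subseteq f^{-1}(u)$ (resp.\ $g^{-1}(u)$) for $u\in X\setminus B$ (resp.\ $Y\setminus B$) and why vertices of $B$ receive no load.
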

\begin{proof}
	\begin{enumerate}[(i)]
		\item Let $(X,f)$ with $X\subseteq V$ and $f\colon V\setminus X \to X$ be a capacitated dominating set of size $\OPTCAP(G)$.
		Then, $(Y,g)$ with $Y\coloneqq (X \cap A)\cup B$ and $g(v) \coloneqq f(v)$ for all $v\in A \setminus Y$ is a capacitated dominating set in $G[A]$ and $(Z,h)$ with $Z\coloneqq (X \cap C)\cup B$ and $h(v) \coloneqq f(v)$ for all $v\in C \setminus Z$ is a capacitated dominating set in $G[C]$.
		Hence,
		\begin{align*}
		\OPTCAP(G) &= \abs{X} = \abs{X\cap A} + \abs{X\cap C} - \abs{X\cap B}\\
		&= \abs{Y} - \abs{B\setminus X} + \abs{Z} - \abs{B} - \abs{X \cap B}\\
		&\ge \abs{Y} + \abs{Z} - 2\abs{B}\\
		&\ge \OPTCAP(G[A]) + \OPTCAP(G[C]) - 2\abs{B}.
		\end{align*}
		\item We construct the capacitated dominating set $(Z,h)$ for $G$ as follows.
		Let $Z\coloneqq X \cup Y \cup N[B]$.
		Observe that $\abs{N[B]} \le (\Delta + 1)\abs{B}$.
		Define $h$ by setting $h(v) \coloneqq f(v)$ for all $v \in A \setminus Z$ and $h(v) \coloneqq g(v)$ for all $v \in C \setminus Z$.
		One can easily verify that this is a capacitated dominating set.
		\qedhere
	\end{enumerate}
\end{proof}

\begin{theorem}
	For every $\eps > 0$, there is a $(1+\eps)$-approximate Turing kernelization for \textsc{Capacitated Dominating Set} with $\bigO(\frac{1+\eps}{\eps} \cdot \mathrm{tw} \cdot \Delta^2)$ vertices.
\end{theorem}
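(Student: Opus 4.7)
The plan is to mimic \cref{alg:ds} for \textsc{Dominating Set}, replacing every use of \creflemmapart{lemma:ds}{lemma:dsI} and \creflemmapart{lemma:ds}{lemma:dsII} with the analogous statements in \cref{lemma:capds}, and to absorb the extra $(\Delta+1)$ factor incurred when combining partial solutions by making the recursion threshold $s$ larger by a factor of $\Delta+1$. Concretely, set
\[ s \coloneqq 2\cdot\frac{1+\eps}{\eps}\cdot(\mathrm{tw}+1)\cdot(\Delta+1)^2. \]
If $\abs{V}\le s$ the algorithm simply queries the oracle on $G$. Otherwise it uses \cref{lemma:find-node} to find a node $t$ with $s \le \abs{V_t} \le 2s$, queries the oracle on the capacitated graph $G[V_t]$ to obtain $(X,f)$, recurses on $G-(V_t\setminus X_t)$ (with the induced tree decomposition) to obtain $(Y,g)$, and then returns the combined capacitated dominating set produced by the polynomial-time procedure of \creflemmapart{lemma:capds}{lemma:capdsII}. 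Each recursive instance has at most $2s=\bigO\bigl(\frac{1+\eps}{\eps}\cdot\mathrm{tw}\cdot\Delta^2\bigr)$ vertices, so this gives the claimed kernel size.

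Correctness of the output as a capacitated dominating set follows directly from \creflemmapart{lemma:capds}{lemma:capdsII}, and polynomial running time is immediate because each recursive step shrinks the graph by at least $|V_t\setminus X_t| \ge s-(\mathrm{tw}+1)$ vertices. The only real content is the approximation bound, which is shown by induction on the number of recursive calls. In the non-recursive base case the oracle produces a $c$-approximate solution, as required. In the inductive step, the combined solution has size at most $\abs{X}+\abs{Y}+(\Delta+1)\cdot\abs{X_t}$ by \creflemmapart{lemma:capds}{lemma:capdsII}. Using $\abs{X}\le c\cdot\OPTCAP(G[V_t])$ and the inductive hypothesis $\abs{Y}\le c(1+\eps)\cdot\OPTCAP(G-(V_t\setminus X_t))$, the bound to be established is
\begin{align*}
\abs{X}+\abs{Y}+(\Delta+1)\abs{X_t} &\le c(1+\eps)\cdot\OPTCAP(G[V_t]) - c\eps\cdot\OPTCAP(G[V_t])\\
&\quad+ c(1+\eps)\cdot\OPTCAP(G-(V_t\setminus X_t)) + (\Delta+1)\abs{X_t}.
\end{align*}

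The main obstacle — and the reason for the $\Delta^2$ factor rather than the $\Delta$ factor of \cref{thm:ub-ds} — is showing that the savings $c\eps\cdot\OPTCAP(G[V_t])$ absorb the overhead $(\Delta+1)\abs{X_t}$ of the combination step. For this I would use that any capacitated dominating set assigns each non-solution vertex to a neighbor with positive capacity and $\CAP(v)\le\deg(v)\le\Delta$, so $\OPTCAP(G[V_t])\ge\abs{V_t}/(\Delta+1)$. Combined with $\abs{V_t}\ge s$ and $\abs{X_t}\le\mathrm{tw}+1$, this yields
\[ c\eps\cdot\OPTCAP(G[V_t]) \ge \frac{c\eps\cdot\abs{V_t}}{\Delta+1} \ge 2c(1+\eps)(\mathrm{tw}+1)(\Delta+1) \ge 2c(1+\eps)(\Delta+1)\abs{X_t}, \]
which in particular dominates $(\Delta+1)\abs{X_t}$. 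Substituting back and applying \creflemmapart{lemma:capds}{lemma:capdsI} with $A=(V\setminus V_t)\cup X_t$, $B=X_t$, $C=V_t$ to bound $\OPTCAP(G[V_t])+\OPTCAP(G-(V_t\setminus X_t)) - 2\abs{X_t} \le \OPTCAP(G)$ closes the induction, giving a solution of size at most $c(1+\eps)\cdot\OPTCAP(G)$ as required.
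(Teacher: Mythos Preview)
Your proof is correct and follows essentially the same approach as the paper: the same recursive scheme, the same use of \cref{lemma:capds}, and the same inductive accounting, with your threshold constant $2$ in place of the paper's $3$ (which still suffices once $\Delta\ge 1$; the edgeless case is trivial). One small slip: the assumption $\CAP(v)\le\deg(v)$ is not part of the problem definition, but your lower bound $\OPTCAP(G[V_t])\ge\abs{V_t}/(\Delta+1)$ is nonetheless valid because every capacitated dominating set is in particular a dominating set, so $\OPTCAP\ge\OPTDS$ and \creflemmapart{lemma:ds}{lemma:dsI} applies --- exactly the route the paper takes.
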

\begin{algorithm}[t]
	\SetAlgoLined
	\DontPrintSemicolon
	\SetKwInOut{Input}{input}\SetKwInOut{Output}{output}
	\Input{A graph $G=(V,E)$, nice tree decomposition $\calT$ of width $\mathrm{tw}$, $\eps>0$}
	$s \leftarrow 3\cdot\frac{1+\eps}{\eps} \cdot (\mathrm{tw}+1) \cdot (\Delta+1)^2$\;
	\eIf{$\abs{V} \leq s$}{
		Apply the $c$-approximate oracle to $G$ and output the result.\;\label{l:capds1}
	}{
		Use \cref{lemma:find-node} to find a node $t$ in $\calT$ such that $s \le \abs{V_t} \le 2s$.\;
		Apply the $c$-approximate oracle to $G[V_t]$ and let $(S_t,f_t)$ be the solution output by the oracle.\;
		Let $\calT'$ be the tree obtained by deleting the subtree rooted at $t$ except for the node $t$ from $\calT$.\;
		Apply this algorithm to $(G - (V_t\setminus X_t),\calT',\eps)$ and let $(S',f')$ be the returned solution.\;
		Apply \creflemmapart{lemma:capds}{lemma:capdsII} with $(X,f) = (S',f')$, $(Y,g) = (S_t,f_t)$, $A=(V\setminus V_t)\cup X_t$, $B=X_t$, and $C=V_t$. Let $(S,f)$ be the resulting solution for $G$.\;
		Return~$(S,f)$.\;\label{l:capds2}
	}
	
	\caption{A $(1+\eps)$-approximate polynomial Turing kernelization for \textsc{Capacitated Dominating Set} parameterized by $\mathrm{tw}+\Delta$}
	\label{alg:capds}
\end{algorithm}
\begin{proof}
	Consider \cref{alg:capds}.
	This algorithm always returns a capacitated dominating set of $G$.
	If the algorithm terminates in line~\ref{l:capds1}, then this is true because the oracle always outputs a capacitated dominating set.
	If it terminates in line~\ref{l:capds2}, then $(S_t,f_t)$ and $(S',f')$ are capacitated dominating sets for $G[V_t]$ and $G-(V_t\setminus X_t)$, respectively. It follows by \creflemmapart{lemma:capds}{lemma:capdsII}, that $(S,f)$ is a capacitated dominating set for $G$.
	
	The algorithm runs in polynomial time.
	
	Finally, we must show that the solution output by the algorithm contains at most $c\cdot (1+\eps)\cdot\OPTCAP(G)$ vertices.
	We prove the claim by induction on the number of recursive calls.
	If there is no recursive call, the algorithm terminates in line~\ref{l:capds1} and the solution contains at most $c\cdot\OPTCAP(G)$ vertices.
	Otherwise, by induction:
	\begin{align*}
	\abs{S} &\le  \abs{S'_t} + \abs{S'} + (\Delta +1)\cdot\abs{X_t} 
	\le c\cdot\OPTCAP(G[V_t]) + \abs{S'} + (\Delta +1)\cdot\abs{X_t}\\
	& =  c\cdot(1+\eps)\cdot\OPTCAP(G[V_t]) - c\cdot \eps \cdot \OPTCAP(G[V_t]) + \abs{S'} + (\Delta +1)\cdot\abs{X_t}\\
	& \stackrel{\dag}{\le} c\cdot(1+\eps)\cdot\OPTCAP(G[V_t]) - \frac{c\cdot \eps \cdot \abs{V_t}}{\Delta + 1} + \abs{S'} + (\Delta +1)\cdot\abs{X_t}\\ 
	& \le  c\cdot(1+\eps)\cdot\OPTCAP(G[V_t]) -  3\cdot c\cdot(1+\eps)\cdot(\mathrm{tw}+1)\cdot(\Delta+1) + \abs{S'} + (\Delta +1)\cdot\abs{X_t}\\
	& \stackrel{\ddag}{\le}  c\cdot(1+\eps)\cdot\OPTCAP(G[V_t]) - 3\cdot c\cdot(1+\eps)\cdot\abs{X_t}\cdot(\Delta+1) + \abs{S'} \\
	& \quad + c\cdot (1+\eps)\cdot (\Delta +1)\cdot\abs{X_t}\\
	& \le  c\cdot(1+\eps)\cdot\OPTCAP(G[V_t]) - 2\cdot c\cdot(1+\eps)\cdot\abs{X_t}\cdot(\Delta+1) + \abs{S'} \\
	& \le  c\cdot(1+\eps)\cdot\OPTCAP(G[V_t]) - 2\cdot c\cdot(1+\eps)\cdot\abs{X_t} + c\cdot(1+\eps)\cdot\OPTCAP(G-V_t)\\
	& =  c\cdot(1+\eps)\cdot(\OPTCAP(G[V_t]) - 2\abs{X_t} + \OPTCAP(G-V_t))\\
	& \stackrel{\mathparagraph}{\le}  c\cdot(1 + \eps) \cdot \OPTCAP(G)
	\end{align*}
	The inequality marked $\dag$ follows from \creflemmapart{lemma:ds}{lemma:dsI} and the fact that $\OPTCAP(G) \ge \OPTDS(G)$.
	$\ddag$ follows from the fact that $c\cdot (1+\eps) \ge 1$ and
	$\mathparagraph$ from \creflemmapart{lemma:capds}{lemma:capdsI}.
\end{proof}

\subsection{Independent Dominating Set}

The next problem we consider is \textsc{Independent Dominating Set}.

\begin{lemma}
	\label{lemma:indds}
	Let $G = (V,E)$ be a graph with maximum degree $\Delta$ and $A,B,C \subseteq V$ such that $A\cup C = V$, $A\cap C = B$, and there are no edges from $A\setminus B$ to $C \setminus B$.
	\begin{enumerate}[(i)]
		\item\label{lemma:inddsI} If $X$ is an independent set in $G$, then there is an independent dominating set $X'$ that contains $X$ and $\abs{X'\setminus X}$ is at most the number of vertices not dominated by $X$,  and such a set $X'$ can be computed in polynomial time.
		\item\label{lemma:inddsII} $\OPTIND(G) \ge \OPTIND(G[A]) + \OPTIND(G[C]) - 2\abs{B}$.
		\item\label{lemma:inddsIII} Given independent dominating sets $X$ and $Y$ in $G[A]$ and $G[C]$, respectively, one can in construct in polynomial time an independent dominating set for $G$ of size at most $\abs{X} + \abs{Y} + (\Delta+1) \cdot \abs{B}$.
	\end{enumerate}
\end{lemma}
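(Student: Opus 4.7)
The plan is to prove the three parts in order, since (ii) and (iii) both rely on (i).

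For part (i), I would extend $X$ greedily: while there exists a vertex $v$ not dominated by the current set, add $v$ to it. Because $v$ is undominated, $v$ is not in the current set and has no neighbor in it, so independence is preserved, and each addition strictly decreases the number of undominated vertices. When no undominated vertex remains, the resulting set is a maximal independent set and thus an independent dominating set, with at most one added vertex per originally undominated vertex.

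For part (ii), I would take an optimal independent dominating set $Z$ of $G$, set $Z_A \coloneqq Z \cap A$ and $Z_C \coloneqq Z \cap C$, and extend each via (i) to independent dominating sets $Z_A'$ and $Z_C'$ in $G[A]$ and $G[C]$, respectively. The key observation is that the only vertices of $A$ that can fail to be dominated by $Z_A$ in $G[A]$ lie in $B \setminus Z$: any $v \in A \setminus B$ is dominated in $G$ only by neighbors in $A$ (by the no-edges hypothesis between $A \setminus B$ and $C \setminus B$), so any such dominator is automatically in $Z_A$; and any $v \in B \cap Z$ dominates itself. Applying (i) therefore adds at most $\abs{B \setminus Z}$ vertices to each side, and since $\abs{Z \cap A} + \abs{Z \cap C} = \abs{Z} + \abs{Z \cap B}$, one gets $\OPTIND(G[A]) + \OPTIND(G[C]) \le \abs{Z} + 2\abs{B} - \abs{Z \cap B} \le \abs{Z} + 2\abs{B}$, which rearranges to the claim.

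For part (iii), I would define $Z' \coloneqq (X \cup Y) \setminus B$. This set is independent: $X$ and $Y$ are each independent, and any cross-edge would run between $X \setminus B \subseteq A \setminus B$ and $Y \setminus B \subseteq C \setminus B$, which is forbidden by hypothesis. Since $X \cup Y$ dominates $V$, any vertex $v$ left undominated by $Z'$ must have all its $(X \cup Y)$-dominators in the removed set $B$, which forces $v \in N[B]$; hence at most $(\Delta + 1)\abs{B}$ vertices are undominated. Applying (i) to $Z'$ then yields an independent dominating set $Z \supseteq Z'$ of $G$ with $\abs{Z} \le \abs{X} + \abs{Y} + (\Delta + 1)\abs{B}$, and the construction is clearly polynomial-time.

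The main obstacle, I expect, will be identifying the right set to remove in (iii). A natural first attempt is to delete all of $N[B]$ in order to isolate the conflict region, but this both overshoots the budget $(\Delta+1)\abs{B}$ and still fails to dominate $N[B]$ itself. The right observation is that every potential conflict edge between $X$ and $Y$ is already incident to $B$, so removing only $B$ suffices to restore independence, while the resulting domination deficit is confined to the comparatively small set $N[B]$ and can be repaired within budget via part (i).
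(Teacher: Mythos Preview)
Your proposal is correct and follows essentially the same approach as the paper's proof: the greedy extension in (i), restricting an optimal solution to $A$ and $C$ and repairing the at most $\abs{B\setminus Z}$ undominated vertices in $B$ via (i) for part (ii), and removing $B$ from $X\cup Y$ and then repairing via (i) for part (iii). If anything, your justifications for independence of $(X\cup Y)\setminus B$ and for the bound on undominated vertices in (iii) are spelled out slightly more carefully than in the paper.
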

\begin{proof}
	\begin{enumerate}[(i)]
		\item If $X$ is a dominating set, then $X'\coloneqq X$.
		Otherwise, there is a vertex $v \in V \setminus N[X]$.
		We add $v$ to $X$ and continue.
		Observe that when $v$ is added to $X$, the latter remains an independent set.
		\item Let $X$ be an independent dominating set of size $\OPTIND(G)$ in $G$.
		Let $Y \coloneqq X \cap A$.
		Since $Y \subseteq X$, it follows that $Y$ is an independent set.
		Moreover, $Y$ dominates all vertices in $(A\setminus B) \cup (X \cap B)$, leaving at most $B \setminus X$ vertices undominated.
		We apply (\ref{lemma:capdsI}) to $Y$ and obtain $Y'$, an independent dominating set in $G[A]$ of size at most $\abs{X\cap A} + \abs{B} - \abs{X \cap B}$.
		We apply the same argument to $G[C]$ to obtain an independent dominating set $Z$ of size at most $\abs{X\cap C} + \abs{B} - \abs{X \cap B}$.
		It follows that:
		\begin{align*}
		\OPTIND(G) &= \abs{X} = \abs{X\cap A} + \abs{X\cap C} - \abs{X\cap B}\\
		&= \abs{Y} - \abs{B\setminus X} + \abs{Z} - \abs{B} - \abs{X \cap B}\\
		&\ge \abs{Y} + \abs{Z} - 2\abs{B}\\
		&\ge \OPTIND(G[A]) + \OPTIND(G[C]) - 2\abs{B}.
		\end{align*}
		\item $Z\coloneqq (X \cup Y) \setminus B$ is an independent set in $G$.
		Since $X \cup Y$ is a dominating set and at most $(\Delta+1)\cdot\abs{B}$ vertices can be dominated by vertices in $B$, it follows that $Z$ leaves at most that many vertices in $G$ undominated.
		Applying (\ref{lemma:capdsI}) to $Z$ yields an independent dominating set of size at most $\abs{X} + \abs{Y} + (\Delta+1)\cdot\abs{B}$.
		\qedhere
	\end{enumerate}
\end{proof}

The kernelization algorithm for \textsc{Independent Dominating Set} uses \cref{lemma:indds}, but is otherwise similar to the one for \textsc{Capacitated Dominating Set} and is deferred to the appendix.

\begin{theorem}
	\label{thm:ub-ind}
	For every $\eps > 0$, there is a $(1+\eps)$-approximate Turing kernelization for \textsc{Independent Dominating Set} with $\bigO(\frac{1+\eps}{\eps} \cdot \mathrm{tw} \cdot \Delta^2)$ vertices.
\end{theorem}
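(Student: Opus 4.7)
The plan is to mimic \cref{alg:capds} almost verbatim, using \cref{lemma:indds} in place of \cref{lemma:capds}. Concretely, set the threshold $s = 3\cdot\frac{1+\eps}{\eps}\cdot(\mathrm{tw}+1)\cdot(\Delta+1)^2$, which matches the \textsc{Capacitated Dominating Set} setting. If $\abs{V}\le s$, invoke the $c$-approximate oracle on $G$ directly. Otherwise use \cref{lemma:find-node} to locate a node $t$ with $s\le\abs{V_t}\le 2s$, call the oracle on $G[V_t]$ to obtain an independent dominating set $S_t$, recurse on $G-(V_t\setminus X_t)$ with the naturally restricted nice tree decomposition to obtain $S'$, and finally combine $S_t$ and $S'$ into an independent dominating set $S$ of $G$ via \creflemmapart{lemma:indds}{lemma:inddsIII}, instantiated with $A=(V\setminus V_t)\cup X_t$, $B=X_t$, and $C=V_t$.

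For correctness, the base case is immediate since the oracle returns an independent dominating set. In the recursive case, $S_t$ and $S'$ are independent dominating sets of $G[V_t]$ and $G[A]$ respectively (the latter by induction), and \creflemmapart{lemma:indds}{lemma:inddsIII} guarantees that the produced set $S$ is an independent dominating set of $G$ with $\abs{S}\le\abs{S_t}+\abs{S'}+(\Delta+1)\cdot\abs{X_t}$. Polynomial running time is clear because each recursive call shrinks $V$ by at least $s-(\mathrm{tw}+1)$ vertices while the work per call is polynomial.

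The size bound is proven by induction on the number of recursive calls, following exactly the calculation used for \textsc{Capacitated Dominating Set}. Starting from $\abs{S}\le c\cdot\OPTIND(G[V_t])+\abs{S'}+(\Delta+1)\cdot\abs{X_t}$, rewrite $c\cdot\OPTIND(G[V_t])$ as $c(1+\eps)\OPTIND(G[V_t])-c\eps\cdot\OPTIND(G[V_t])$. Use the bound $\OPTIND(G[V_t])\ge\frac{\abs{V_t}}{\Delta+1}\ge\frac{s}{\Delta+1}=3\cdot\frac{1+\eps}{\eps}\cdot(\mathrm{tw}+1)\cdot(\Delta+1)$ — note this lower bound for independent dominating sets follows from the same counting argument as \creflemmapart{lemma:ds}{lemma:dsI}, since any independent dominating set also dominates all vertices. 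The resulting slack $-3c(1+\eps)\cdot\abs{X_t}\cdot(\Delta+1)$ absorbs both the overhead term $(\Delta+1)\cdot\abs{X_t}\le c(1+\eps)\cdot(\Delta+1)\cdot\abs{X_t}$ (since $c(1+\eps)\ge 1$) and leaves $-2c(1+\eps)\cdot\abs{X_t}$ to pair with the inductive estimate $\abs{S'}\le c(1+\eps)\cdot\OPTIND(G-V_t)$. The sum then collapses via \creflemmapart{lemma:indds}{lemma:inddsII} applied with the same $A$, $B$, $C$ as above to $c(1+\eps)\cdot\OPTIND(G)$, completing the induction.

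The main thing to be careful about is the bookkeeping in the combination step: \creflemmapart{lemma:indds}{lemma:inddsIII} produces an independent dominating set whose extra cost is $(\Delta+1)\cdot\abs{B}$ rather than the $2\abs{B}$ term of \textsc{Dominating Set}, and this is exactly why the threshold $s$ must carry an extra factor of $(\Delta+1)$ compared to \cref{alg:ds}, matching the size bound $\bigO(\frac{1+\eps}{\eps}\cdot\mathrm{tw}\cdot\Delta^2)$ in the theorem. No other new ideas are required beyond those already used for \textsc{Capacitated Dominating Set}.
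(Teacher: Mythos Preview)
Your proposal is correct and matches the paper's own proof essentially line for line: the same threshold $s=3\cdot\frac{1+\eps}{\eps}(\mathrm{tw}+1)(\Delta+1)^2$, the same recursive split via \cref{lemma:find-node}, the same combination step through \creflemmapart{lemma:indds}{lemma:inddsIII}, and the same inductive bookkeeping using $\OPTIND(G[V_t])\ge\OPTDS(G[V_t])\ge\frac{\abs{V_t}}{\Delta+1}$ together with \creflemmapart{lemma:indds}{lemma:inddsII}. The only cosmetic point is that the inductive bound on $\abs{S'}$ should read $c(1+\eps)\cdot\OPTIND(G[A])$ (i.e.\ for the graph $G-(V_t\setminus X_t)$ on which you actually recurse) rather than $\OPTIND(G-V_t)$; this is exactly what \creflemmapart{lemma:indds}{lemma:inddsII} needs, and the paper commits the same shorthand.
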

\begin{algorithm}[t]
	\SetAlgoLined
	\DontPrintSemicolon
	\SetKwInOut{Input}{input}\SetKwInOut{Output}{output}
	\Input{A graph $G=(V,E)$, nice tree decomposition $\calT$ of width $\mathrm{tw}$, $\eps>0$}
	$s \leftarrow \abs{V} \leq 3\cdot\frac{1+\eps}{\eps} \cdot (\mathrm{tw}+1) \cdot (\Delta+1)^2$\;
	\eIf{$\abs{V} \leq s$}{
		Apply the $c$-approximate oracle to $G$ and output the result.\;\label{l:indds1}
	}{
		Use \cref{lemma:find-node} to find a node $t$ in $\calT$ such that $s \le \abs{V_t} \le 2s$.\;
		Apply the $c$-approximate oracle to $G[V_t]$ and let $S_t$ be the solution output by the oracle.\;
		Let $\calT'$ be the tree obtained by deleting the subtree rooted at $t$ except for the node $t$ from $\calT$.\;
		Apply this algorithm to $(G - (V_t\setminus X_t),\calT',\eps)$ and let $S'$ be the returned solution.\;
		Apply \creflemmapart{lemma:indds}{lemma:inddsIII} with $X = S'$, $Y = S_t$, $A=(V\setminus V_t)\cup X_t$, $B=X_t$, and $C=V_t$. Let $S$ be the resulting solution for $G$.\;
		Return~$S$.\;\label{l:indds2}
	}
	\caption{A $(1+\eps)$-approximate polynomial Turing kernelization for \textsc{Independent Dominating Set} parameterized by $\mathrm{tw}+\Delta$}
	\label{alg:indds}
\end{algorithm}

\begin{proof}
	Consider \cref{alg:indds}.
	This algorithm always returns an independent dominating set of $G$.
	If the algorithm terminates in line~\ref{l:indds1}, then this is true because the oracle always outputs an independent dominating set.
	If it terminates in line~\ref{l:indds2}, then $S_t$ and $S'$ are independent dominating sets for $G[V_t]$ and $G-(V_t\setminus X_t)$, respectively. It follows by \creflemmapart{lemma:indds}{lemma:inddsIII}, that $S$ is an independent dominating set for $G$.
	
	The algorithm runs in polynomial time.
	
	Finally, we must show that the solution output by the algorithm contains at most $c\cdot (1+\eps)\cdot\OPTIND(G)$ vertices.
	We prove the claim by induction on the number of recursive calls.
	If there is no recursive call, the algorithm terminates in line~\ref{l:indds1} and the solution contains at most $c\cdot\OPTIND(G)$ vertices.
	Otherwise, by induction:
	\begin{align*}
	\abs{S} &\le \abs{S'_t} + \abs{S'} + (\Delta +1)\cdot\abs{X_t} 
	\le c\cdot\OPTIND(G[V_t]) + \abs{S'} + (\Delta +1)\cdot\abs{X_t}\\
	&= c\cdot(1+\eps)\cdot\OPTIND(G[V_t]) - c\cdot \eps \cdot \OPTIND(G[V_t]) + \abs{S'} + (\Delta +1)\cdot\abs{X_t}\\
	&\stackrel{\dag}{\le} c\cdot(1+\eps)\cdot\OPTIND(G[V_t]) - \frac{c\cdot \eps \cdot \abs{V_t}}{\Delta + 1} + \abs{S'} + (\Delta +1)\cdot\abs{X_t}\\ 
	&\le c\cdot(1+\eps)\cdot\OPTIND(G[V_t]) -  3\cdot c\cdot(1+\eps)\cdot(\mathrm{tw}+1)\cdot(\Delta+1) + \abs{S'} + (\Delta +1)\cdot\abs{X_t}\\
	&\stackrel{\ddag}{\le} c\cdot(1+\eps)\cdot\OPTIND(G[V_t]) - 3\cdot c\cdot(1+\eps)\cdot\abs{X_t}\cdot(\Delta+1)\\
	& \quad + \abs{S'} + c\cdot (1+\eps)\cdot (\Delta +1)\cdot\abs{X_t}\\
	&\le c\cdot(1+\eps)\cdot\OPTIND(G[V_t]) - 2\cdot c\cdot(1+\eps)\cdot\abs{X_t}\cdot(\Delta+1) + \abs{S'} \\
	& \le c\cdot(1+\eps)\cdot\OPTIND(G[V_t]) - 2\cdot c\cdot(1+\eps)\cdot\abs{X_t} + c\cdot(1+\eps)\cdot\OPTIND(G-V_t)\\
	& = c\cdot(1+\eps)\cdot(\OPTIND(G[V_t]) - 2\abs{X_t} + \OPTIND(G-V_t))\\
	&\stackrel{\mathparagraph}{\le} c\cdot(1 + \eps) \cdot \OPTIND(G)
	\end{align*}
	The inequality marked with $\dag$ follows from \creflemmapart{lemma:ds}{lemma:dsI} and the fact that $\OPTIND(G) \ge \OPTDS(G)$.
	$\ddag$ follows from the fact that $c\cdot (1+\eps) \ge 1$ and 	$\mathparagraph$ from \creflemmapart{lemma:indds}{lemma:inddsII}.
\end{proof}

\subsection{Connected Dominating Set}
\label{sec:cds}
Finally, we consider the problem \textsc{Connected Dominating Set}.
If $S\subseteq V$ is a vertex set in a graph $G=(V,E)$, then let $R(G,S)$ denote the graph obtained by deleting $S$, introducing a new vertex $z$, and connecting $z$ to any vertex in $V\setminus S$ that has a neighbor in $S$.

\begin{lemma}
	\label{lemma:conds}
	Let $G = (V,E)$ be a connected graph and $A,B,C \subseteq V$ such that $A\cup C = V$, $A\cap C = B$, there are no edges from $A\setminus B$ to $C \setminus B$, and $A\setminus B$ and $C\setminus B$ are both non-empty.
	\begin{enumerate}[(i)]
		\item\label{lemma:condsI} $\OPTCON(G) \ge \OPTCON(R(G[A],B)) + \OPTCON(R(G[C],B)) - 2$.
		\item\label{lemma:condsII} Given connected dominating sets $X$ and $Y$ in $R(G[A],B)$ and $R(G[C],B)$, respectively, one can in construct in polynomial time a connected dominating set for $G$ of size at most $\abs{X} + \abs{Y} + 3\abs{B}$.
	\end{enumerate}
\end{lemma}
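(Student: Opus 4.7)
My plan is to handle (i) and (ii) via natural lifting constructions between $G$ and the reduced graphs $R(G[A], B)$, $R(G[C], B)$.

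For (i), I would take an optimal connected dominating set $D$ of $G$ and define $D_A := (D \cap (A \setminus B)) \cup \{z_A\}$ together with the symmetric $D_C := (D \cap (C \setminus B)) \cup \{z_C\}$. The first step is to check that $D_A$ is a connected dominating set of $R(G[A], B)$ (and $D_C$ of $R(G[C], B)$ by the symmetric argument). Domination follows by case analysis on a dominator $u \in D$ of a given vertex $v \in A \setminus B$: since there are no edges between $A \setminus B$ and $C \setminus B$, $u$ must lie in $A$, so either $u \in D \cap (A \setminus B) \subseteq D_A$, or $u \in B$ and then $v$ is a neighbor of $z_A$. For connectedness, I would take an arbitrary $G[D]$-path between two vertices of $D_A$ and project it to $R(G[A], B)[D_A]$ by collapsing each maximal sub-walk through $B \cup (C \setminus B)$ into a single visit to $z_A$; consecutive entries of the resulting walk are adjacent in $R(G[A], B)[D_A]$ by the separator property. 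Summing yields $\abs{D_A} + \abs{D_C} \le \abs{D \cap (A \setminus B)} + \abs{D \cap (C \setminus B)} + 2 \le \abs{D} + 2 = \OPTCON(G) + 2$, which rearranges to the claimed inequality.

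For (ii), set $X' := X \setminus \{z_A\}$ and $Y' := Y \setminus \{z_C\}$, and construct $S := X' \cup Y' \cup B \cup T_A \cup T_C$, where $T_A$ contains, for each $b \in B$ with $N_G(b) \cap (A \setminus B) \neq \emptyset$, one such neighbor $a_b$, and $T_C$ is defined symmetrically. The size bound $\abs{S} \le (\abs{X}-1) + (\abs{Y}-1) + \abs{B} + \abs{B} + \abs{B} \le \abs{X} + \abs{Y} + 3\abs{B}$ is immediate. Domination of $V$ by $S$ mirrors (i): any $v \in A \setminus B$ is dominated by $X$ in $R(G[A], B)$, so in $G$ either $v \in X' \subseteq S$ or $v$ has a neighbor in $X' \cup B \subseteq S$; the $C \setminus B$ case is symmetric, and $B \subseteq S$ is self-dominating.

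The main obstacle is establishing that $G[S]$ is connected. I would first show that every $v \in X'$ lies in the same $G[S]$-component as some $b \in B$: the connectedness of $R(G[A], B)[X]$ forces the $G[X']$-component of $v$ to contain a vertex $v^*$ adjacent to $z_A$ in $R(G[A], B)$, which in $G$ has a neighbor in $B$; the $Y'$-side is symmetric, and each vertex of $T_A \cup T_C$ is adjacent to its corresponding $b$ by construction. The remaining, most delicate step is to ensure that all of $B$ lies in a single $G[S]$-component: a naive choice of $a_b, c_b$ can leave $B$ fragmented, so I plan to select them adaptively in a union-find fashion, picking each $a_b$ (respectively $c_b$) to merge distinct current $G[S]$-components whenever possible. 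Since $G$ is connected and $B$ separates $A \setminus B$ from $C \setminus B$, any two $G[B]$-components are joined by a path through either $A \setminus B$ or $C \setminus B$, and the slack of $2\abs{B}$ vertices available in $T_A \cup T_C$ should suffice to merge all components into one.
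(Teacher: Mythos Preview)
Your argument for (i) is correct and essentially coincides with the paper's: both restrict an optimal connected dominating set $D$ of $G$ to $A\setminus B$ (resp.\ $C\setminus B$), adjoin the new vertex $z$, and verify domination and connectedness in the reduced graph by projecting $G[D]$-paths. (A small omission: to connect a vertex $v\in D_A\setminus\{z_A\}$ to $z_A$ you should use a $G[D]$-path from $v$ to some vertex of $D\cap B$; such a vertex exists because $D$ is connected and dominates both nonempty sides of the separator, so $D$ must meet $B$.) Also note that your ``$-1$'' in the size bound for (ii) is only valid when $z_A\in X$; since $|X'|\le|X|$ in any case, the final bound $|X|+|Y|+3|B|$ still holds.

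Your plan for (ii), however, has a genuine gap. The sets $T_A,T_C$ consist only of vertices in $N_G(B)$, and there is no reason the components of $G[X'\cup Y'\cup B]$ can be linked using only neighbours of $B$, no matter how adaptively the $a_b,c_b$ are chosen: the choices may simply be forced. Concretely, take $B=\{b_1,b_2\}$ with $b_1b_2\notin E$, let $A\setminus B=\{a_1,a_2,a_3,a_4\}$ induce the path $a_1a_2a_3a_4$ with $b_1a_1,b_2a_4\in E$, and let $C\setminus B=\{c_1,c_2\}$ be independent with $b_1c_1,b_2c_2\in E$. Then $R(G[A],B)$ is a $5$-cycle, so the oracle may return $X=\{a_1,z_A,a_4\}$, while $R(G[C],B)$ is a $3$-vertex path with centre $z_C$, so the oracle may return $Y=\{z_C\}$. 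The only possible choices are $a_{b_1}=a_1$, $a_{b_2}=a_4$, $c_{b_1}=c_1$, $c_{b_2}=c_2$, giving $S=\{a_1,a_4,b_1,b_2,c_1,c_2\}$; but $G[S]$ splits into $\{b_1,a_1,c_1\}$ and $\{b_2,a_4,c_2\}$. The unique $G$-path between these components passes through $a_2,a_3\notin N_G(B)$, so your adaptive union--find scheme cannot supply them.

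The paper takes a different route: it sets $Z':=(X\setminus\{z\})\cup(Y\setminus\{z\})\cup B$, observes that $Z'$ is already a dominating set of $G$ with at most $|B|$ components (each containing some vertex of $B$), and then merges components iteratively. For two components $C_1,C_2$ it walks along a $G$-path from $C_1$ until reaching the first vertex $w\notin N[C_1]$; since $Z'$ dominates $w$, some $x\in Z'\setminus C_1$ lies in $N[w]$, and adding at most two vertices (the predecessor of $w$, adjacent to $C_1$, together with $w$) merges $C_1$ with the component of $x$. At most $|B|$ merges cost at most $2|B|$ extra vertices, giving the bound $|X|+|Y|+3|B|$. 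The key difference is that these extra vertices are unconstrained, whereas your $T_A\cup T_C$ are pinned to $N_G(B)$.
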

\begin{proof}
	\begin{enumerate}[(i)]
		\item Let $X$ be a connected dominating set in $G$ of size $\OPTCON(G)$.
		We claim that $Y \coloneqq (X\cap (A \setminus B)) \cup \{z\}$ and $Z \coloneqq (X\cap (C\setminus B)) \cup \{z\}$ are connected dominating sets in $R(G[A],B)$ and $R(G[C],B)$, respectively.
		We only prove this for $Y$ and $R(G[A],B))$, as the case of $Z$ and $R(G[C],B)$ is analogous.
		
		First, we show that $Y$ is a dominating set.
		Let $v$ be a vertex in $R(G[A],B))$.
		If $v \in \{z,z'\}$, then $v$ is dominated by $z$ in $Y$.
		Otherwise, $v \in A \setminus B$ and there is a vertex $w \in X$ that dominates $v$ in $G$.
		If $w \in B$, then $z$ is adjacent to $v$ in $R(G[A],B))$ and $v$ is dominated by $z$ in that graph.
		If $w \in A \setminus B$, then $w \in Y$ and $v$ is dominated by $w$ in $R(G[A],B))$.
		
		We must also show that the subgraph of $R(G[A],B)$ induced by $Y$ is connected.
		Let $v,v' \in Y$.
		We must show that the subgraph of $R(G[A],B)$ induced by $Y$ contains a path from $v$ to $v'$.
		First we assume that $v,v' \ne z,z'$, implying that $v,v' \in X$.
		Hence, there is a path $P$ from $v$ to $v'$ in $G[X]$.
		If $P \subseteq A \setminus B$, then $P \subseteq Y$ and we are done.
		Otherwise, $P$ must pass through $B$.
		Let $w$ be the first vertex in $B$ on $P$ and $w'$ the final one.
		Obtain $P'$ by replacing the subpath of $P$ between and including $w$ and $w'$ with $z$.
		Then, $P'$ is a path from $v$ to $v'$ in the subgraph of $R(G[A],B)$ induced by $Y$.
		Finally, suppose that $v' = z$.
		If $v= z$, there is nothing to show, so we assume that $v\ne z$ and, therefore, $v\in X\cap (A\setminus B)$.
		Because $C\setminus B$ is non-empty, $X$ must contain a vertex $w\in B$.
		Because $G[X]$ is connected, $G[X]$ must also contain a path $P$ from $v$ to $w$.
		Let $w'$ be the first vertex in $B$ on the path $P$ (possibly, $w'=w$).
		We obtain $P'$, a path from $v$ to $z$ in the subgraph of $R(G[A],B)$ induced by $Y$, by taking the subpath of $P$ from $v$ to $w'$ and replacing $w'$ with $z$.
		This proves that $Y$ is a connected dominating set in $R(G[A],B)$.
		Then,
		\begin{align*}
		\OPTCON(G) &= \abs{X} \ge \abs{X\cap (A\setminus B)} + \abs{X\cap (C\setminus B)} \\
		&\ge \abs{Y} - 1 + \abs{Z} - 1\\
		&\ge \OPTCON(R(G[A],B)) + \OPTCON(R(G[C],B)) - 2.
		\end{align*}
		\item 
		Let $Z' \coloneqq X \cup Y \cup B$.
		Every connected component of $G[Z']$ contains a vertex in $B$, so this graph as at most $\abs{B}$ connected components.
		We obtain a connected dominating set $Z$ in $G$ as follows.
		We start with $Z \coloneqq Z'$.
		Choose two connected components $C_1,C_2$ in $G[Z]$.
		Because $G$ is connected, it contains a path $P$ starting in $v_1 \in C_1$ and ending in $v_2 \in C_2$.
		This path must contain a vertex that is not adjacent to any vertex in $C_1$, because if every vertex in $P\setminus C_1$ were adjacent to a vertex in $C_1$, then $v_2$ is adjacent to a vertex in $C_1$, implying that $C_1$ and $C_2$ are not distinct connected components in $G[Z]$
		Let $w$ be the first vertex on $P$ that is not adjacent to a vertex in $C_1$.
		Because $Z$ is a dominating set in $G$, there must be a vertex $x \in Z \setminus C_1$ such that $w \in N[x]$ (note that, possible $w=x$).
		Adding $w$ and $x$ merges $C_1$ with the connected component of $G[Z]$ containing $x$.
		This process must be repeated at most $\abs{B}$ times to obtain a connected dominating set.
		In each iteration at most two vertices are added to $Z$.
		Since $Z$ initially contains $\abs{X} + \abs{Z} + \abs{B}$ vertices, we obtain a connected dominating set containing at most $\abs{X} + \abs{Z} + 3\abs{B}$ vertices.\qedhere
	\end{enumerate}
\end{proof}

The kernelization algorithm for \textsc{Connected Dominating Set} is similar to the one for \textsc{Capacitated Dominating Set} and is deferred to the appendix.

\begin{theorem}
	\label{thm:ub-con}
	For every $\eps > 0$, there is a $(1+\eps)$-approximate Turing kernelization for \textsc{Connected Dominating Set} with $\bigO(\frac{1+\eps}{\eps} \cdot \mathrm{tw} \cdot \Delta)$ vertices.
\end{theorem}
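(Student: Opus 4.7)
The plan is to mirror the structure of Algorithm~\ref{alg:capds}. I would set a threshold $s = \Theta\!\left(\frac{1+\eps}{\eps}(\mathrm{tw}+1)(\Delta+1)\right)$; if $|V|\le s$ the oracle is invoked directly on $G$, otherwise \cref{lemma:find-node} is used to locate a node $t$ with $s\le |V_t|\le 2s$, the oracle is queried on the reduced ``interior'' graph $R_1 = R(G[V_t],X_t)$, the algorithm recurses on the reduced ``exterior'' graph $R_2 = R(G-(V_t\setminus X_t),X_t)$, and the two returned connected dominating sets $S_t$ and $S'$ are combined via \creflemmapart{lemma:conds}{lemma:condsII}. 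This produces a connected dominating set $S$ of $G$ with $|S|\le |S_t|+|S'|+3|X_t|$, so correctness is immediate from that lemma. A tree decomposition of $R_2$ of width at most $\mathrm{tw}$ is obtained from $\calT'$ by replacing each occurrence of an $X_t$-vertex with the single proxy vertex $z$; the $z$-bags form a connected subtree because the $X_t$-bags do.

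The approximation guarantee is shown by induction on the recursion depth, along exactly the lines of \cref{thm:ub-ds}. The oracle bound and the inductive hypothesis give
\[
|S|\le c\cdot\OPTCON(R_1)+c(1+\eps)\cdot\OPTCON(R_2)+3|X_t|,
\]
while \creflemmapart{lemma:conds}{lemma:condsI} provides $\OPTCON(G)\ge\OPTCON(R_1)+\OPTCON(R_2)-2$. Subtracting and rearranging, the induction step reduces to the slack inequality $3|X_t|+2c(1+\eps)\le c\cdot\eps\cdot\OPTCON(R_1)$, so the core of the argument is a lower bound of the shape $\OPTCON(R_1)\ge|V_t|/(\Delta+1)-\bigO(\mathrm{tw})$. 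I would establish this by a short double-counting step: every non-proxy vertex of $R_1$ lives in $V_t\setminus X_t$ and has degree at most $\Delta$ in $R_1$ because multiple neighbors in $X_t$ collapse to a single edge to $z$, while $z$ itself has degree at most $|X_t|\cdot\Delta$. Comparing $\sum_{u\in Y}|N_{R_1}[u]|$ with $|V(R_1)|=|V_t|-|X_t|+1$ for any CDS $Y$ of $R_1$ yields $|Y|\ge|V_t|/(\Delta+1)-|X_t|$, which, for the chosen $s$ and using $|V_t|\ge s$, $|X_t|\le \mathrm{tw}+1$, is $\Omega((\mathrm{tw}+1)/\eps)$ and thus beats the right-hand side of the slack inequality.

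The main obstacle I anticipate is controlling the proxy vertex across recursive calls: if the $z$ introduced in one call later lands in $V_{t'}\setminus X_{t'}$ at a deeper level, its potentially large degree invalidates the counting step for that level. I would deal with this by further augmenting the recursive tree decomposition so that $z$ also appears in every bag on the path from its connected subtree of bags up to the root of $\calT'$. This adds at most one to the width and guarantees $z\in X_{t'}$ whenever $z\in V_{t'}$, so that in every recursive call the non-proxy vertices of the relevant interior graph still have degree at most the \emph{original} $\Delta$ and the counting lemma keeps applying. Because there is only one active proxy at a time, the width remains $\mathrm{tw}+\bigO(1)$ throughout; $s$ can therefore be computed using the original $\mathrm{tw}$ and $\Delta$ at each level, and every oracle instance has size $\bigO\!\left(\frac{1+\eps}{\eps}\cdot\mathrm{tw}\cdot\Delta\right)$ as claimed.
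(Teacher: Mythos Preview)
Your proposal is correct and follows essentially the same algorithm and inductive analysis as the paper's proof (Algorithm~\ref{alg:conds} together with \cref{lemma:conds} and the $\OPTDS$ lower bound from \cref{lemma:ds}). In fact you are more careful than the paper on two points it treats only implicitly: the high degree of the proxy vertex $z$ when lower-bounding $\OPTCON(R_1)$, and the maintenance of a bounded-width tree decomposition (with $z$ forced into the separator bag) across recursive calls; your fix of propagating $z$ to the root keeps the width at $\mathrm{tw}+\bigO(1)$ and ensures $z\in X_{t'}$ whenever $z\in V_{t'}$, exactly as needed.
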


\begin{algorithm}[t]
	\SetAlgoLined
	\DontPrintSemicolon
	\SetKwInOut{Input}{input}\SetKwInOut{Output}{output}
	\Input{A graph $G=(V,E)$, nice tree decomposition $\calT$ of width $\mathrm{tw}$, $\eps>0$}
	$s \leftarrow  4\cdot \frac{1+\eps}{\eps}(\Delta+1)(\mathrm{tw}+1) + \frac{(2 \Delta+2)(1+\eps)}{\eps}$\;
	\eIf{$\abs{V} \leq s$}{
		Apply the $c$-approximate oracle to $G$ and output the result.\;
		\label{l:conds1}
	}{
		Use \cref{lemma:find-node} to find a node $t$ in $\calT$ such that $s \le \abs{V_t} \le 2s$.\;
		Apply the $c$-approximate oracle to $R(G[V_t],X_t)$ and let $S_t$ be the solution output by the oracle..\;
		Let $\calT'$ be the tree obtained by deleting the subtree rooted at $t$ except for the node $t$ from $\calT$.\;
		Apply this algorithm to $(R(G - (V_t\setminus X_t),X_t),\calT',\eps)$ and let $S'$ be the returned solution.\;
		Apply \creflemmapart{lemma:conds}{lemma:condsII} with $X = S'$, $Y = S_t$, $A=(V\setminus V_t)\cup X_t$, $B=X_t$, and $C=V_t$. Let $S$ be the resulting solution for $G$.\;
		Return~$S$.\;\label{l:conds2}	
	}
	
	\caption{A $(1+\eps)$-approximate polynomial Turing kernelization for \textsc{Connected Dominating Set} parameterized by $\mathrm{tw}+\Delta$}
	\label{alg:conds}
\end{algorithm}

\begin{proof}
	Consider \cref{alg:conds}. This algorithm always returns a connected dominating set of $G$.
	If the algorithm terminates in line~\ref{l:conds1}, then this is true because the oracle always outputs a connected dominating set.
	If it terminates in line~\ref{l:conds2}, then $S_t$ and $S'$ are connected dominating sets for $R(G[V_t],X_t)$ and $R(G - (V_t\setminus X_t))$, respectively. It follows by \creflemmapart{lemma:conds}{lemma:condsII}, that $S$ is a connected dominating set for $G$.
	
	The algorithm runs in polynomial time.
	
	Finally, we must show that the solution output by the algorithm contains at most $c\cdot (1+\eps)\cdot\OPTCON(G)$ vertices.
	We prove the claim by induction on the number of recursive calls.
	If there is no recursive call, the algorithm terminates in line~\ref{l:conds1} and the solution contains at most $c\cdot\OPTCAP(G)$ vertices.
	Otherwise, by induction:
	\begin{align*}
	\abs{S} &\le \abs{S'_t} + \abs{S'} + 3\abs{X_t} 
	\le c\cdot\OPTCON(R(G[V_t],X_t)) + \abs{S'} + 3\abs{X_t}\\
	&= c\cdot(1+\eps)\cdot\OPTCON(R(G[V_t],X_t)) - c\cdot \eps \cdot \OPTCON(R(G[V_t],X_t)) + \abs{S'} + 3\abs{X_t}\\
	&\stackrel{\dag}{\le} c\cdot(1+\eps)\cdot\OPTCON(R(G[V_t],X_t)) - \frac{c\cdot \eps \cdot (\abs{V_t}-X_t+1)}{\Delta + 1} + \abs{S'} + 3\abs{X_t}\\
	&= c\cdot(1+\eps)\cdot\OPTCON(R(G[V_t],X_t)) - \frac{c\cdot \eps \cdot \abs{V_t}}{\Delta + 1}  + \abs{S'}  + \frac{c\cdot\eps\cdot(\abs{X_t}+1)}{\Delta+1} + 3\abs{X_t}\\
	&\le c\cdot(1+\eps)\cdot\OPTCON(G[V_t]) -  4\cdot c\cdot(1+\eps)\cdot(\mathrm{tw}+2) - 2c(1+\eps) + \abs{S'}\\
	& \quad + \frac{c\cdot\eps\cdot(\abs{X_t}+1)}{\Delta+1} + 3\abs{X_t}\\
	&\stackrel{\ddag}{\le} c\cdot(1+\eps)\cdot\OPTCON(G[V_t]) - 4\cdot c\cdot(1+\eps)\cdot(\abs{X_t}+1) - 2c(1+\eps) + \abs{S'} \\
	& \quad  + c\cdot (1+\eps)\cdot(4\abs{X_t} + 1)\\
	&\le c\cdot(1+\eps)\cdot\OPTCON(G[V_t]) - 2\cdot c(1+\eps) + \abs{S'} \\
	& \le c\cdot(1+\eps)\cdot\OPTCON(G[V_t]) - 2c(1+\eps) + c\cdot(1+\eps)\cdot\OPTCON(G-V_t)\\
	& = c\cdot(1+\eps)\cdot(\OPTCON(G[V_t]) - 2 + \OPTCON(G-V_t))\\
	&\stackrel{\mathparagraph}{\le} c\cdot(1 + \eps) \cdot \OPTCON(G)
	\end{align*}
	The inequality marked with $\dag$ follows from \creflemmapart{lemma:ds}{lemma:dsII} and the fact that $\OPTCON(G) \ge \OPTDS(G)$.
	$\ddag$ follows from the fact that $c\cdot (1+\eps) \ge 1$ and 
	$\mathparagraph$ from \creflemmapart{lemma:conds}{lemma:condsI}.
\end{proof}

\section{Conclusion}
We conclude by pointing out two open problems concerning approximate Turing kernelization:
\begin{itemize}
	\item Does \textsc{Connected Feedback Vertex Set} parameterized by treewidth admit an approximate polynomial Turing kernelization?
	The approach employed by Hols et al.~\cite{Hols2020} for \textsc{Connected Vertex Cover} and here for \textsc{Connected Dominating Set} cannot be used for \textsc{Connected Feedback Vertex Set}, because the ratio between the size of a minimum connected feedback vertex and the size of a minimum feedback vertex set is unbounded.
	\item The biggest open question in Turing kernelization is whether or not there are polynomial Turing kernelizations for the problems \textsc{Longest Path} and \textsc{Longest Cycle} parameterized by the solution size~\cite{Hermelin2013}.
	There has been some progress on this problem by considering the restriction to certain graph classes~\cite{Jansen2017,Jansen2018}.
	Developing an \emph{approximate} Turing kernelization may be another way of achieving progress in this regard.
\end{itemize}

\bibliography{strings-long,tk}

\clearpage

\end{document}